\theoremstyle{definition}
\newtheorem{theorem}{THEOREM}
\newtheorem*{theorem*}{THEOREM}
\newtheorem{definition}[theorem]{DEFINITION}
\newtheorem{definition*}{DEFINITION}
\newtheorem{lemma}[theorem]{LEMMA}
\newcommand{\defeq}{\stackrel{\mathrm{def}}{=}}
\newcommand{\1}{\mbox{1}\hspace{-0.25em}\mbox{l}}
\begin{document}

\title{Estimation of Graph Features Based on Random Walks \\ Using Neighbors' Properties}

\author{Tsuyoshi Hasegawa}
\affiliation{%
  \institution{Kyoto University}
  \country{Japan}
}

\author{Shiori Hironaka}
\affiliation{%
      \institution{Kyoto University}
  \country{Japan}
}

\author{Kazuyuki Shudo}
\affiliation{%
  \institution{Kyoto University}
  \country{Japan}
}
\renewcommand{\shortauthors}{Trovato et al.}

\begin{abstract}
  Using random walks for sampling has proven advantageous in assessing the characteristics of large and unknown social networks. Several algorithms based on random walks have been introduced in recent years. In the practical application of social network sampling, there is a recurrent reliance on an application programming interface (API) for obtaining adjacent nodes. However, owing to constraints related to query frequency and associated API expenses, it is preferable to minimize API calls during the feature estimation process. In this study, considering the acquisition of neighboring nodes as a cost factor, we introduce a feature estimation algorithm that outperforms existing algorithms in terms of accuracy. Through experiments that simulate sampling on known graphs, we demonstrate the superior accuracy of our proposed algorithm when compared to existing alternatives.
\end{abstract}

\begin{CCSXML}
<ccs2012>
   <concept>
       <concept_id>10003752.10010061.10010065</concept_id>
       <concept_desc>Theory of computation~Random walks and Markov chains</concept_desc>
       <concept_significance>500</concept_significance>
       </concept>
 </ccs2012>
\end{CCSXML}

\ccsdesc[500]{Theory of computation~Random walks and Markov chains}

\keywords{Social Network, Random Walk, Graph Sampling}

\maketitle

\section{Introduction}
Examining the graph structure of nodes and edges in online social networks (OSNs) is a significant challenge, prompting active research efforts to address this issue~\cite{ahn2007analysis,mislove2007measurement,gjoka2011practical,kwak2010twitter}. However, data access in conventional OSNs, like X\footnote{\url{https://twitter.com}}, is restricted~\cite{twitterapi, facebookapi, mastodonapi}, redering it nearly impossible to acquire and analyze the complete graph. Therefore, a pragmatic strategy involves estimating the graph's features by sampling a representative portion of the OSNs.

To estimate OSNs' features through sampling, leveraging random walks proves advantageous. Several random walk algorithms have been introduced for unbiased feature estimation~\cite{hardiman2013estimating,li2015on,ribeiro2012on,chen2016general,iwasaki2018estimating,matsumura2018average,nakajima2018estimating,matsumura2019metropolis}. Many OSNs offer application programming interfaces (APIs) that provide access to information about a user's follower or followee lists, specifically details about adjacent nodes~\cite{twitterapi,facebookapi,mastodonapi}. By iteratively selecting a node at random from the adjacent nodes obtained through the API and transitioning, random walk sampling on OSNs becomes feasible. Exploiting the inherent Markov property of random walks enables the computation of suitable weights for the obtained sample sequence, enabling the derivation of unbiased estimates for OSNs~\cite{li2015on,gjoka2011practical,lee2012beyond}. Uniform independent sampling based on node IDs is generally challenging owing to the unknown distribution of node IDs~\cite{chiericetti2016sampling}. Additionally, traversal methods like breadth-first sampling~\cite{kurant2011towards} cannot provide unbiased features owing to unknown biases in the acquired sample sequence.

APIs from common OSNs restrict the number of queries allowed per unit of time. Moreover, certain OSNs, such as X, have introduced charges for API usage. Therefore, estimating OSN features with minimal API calls is crucial, considering time and cost factors. Iwasaki et al.~\cite{iwasaki2018comparing} treated the API call count as a cost and compared it with existing random walk-based feature estimation algorithms. In our approach, we focus that many APIs allow obtaining both the list of adjacent nodes and the degree of those nodes simultaneously. We leverage this information to develop a more efficient algorithm. In the proposed algorithm, not confined to degree estimation, we can also estimate unbiased OSN features for any features obtained simultaneously when acquiring the list of adjacent nodes.  

In this study, we propose an algorithm for estimating features in OSNs using random walks and properties of adjacent nodes. Through simulation experiments, we demonstrate that our proposed algorithm attains the highest accuracy in estimating OSN features compared to existing methods. Our proposed method leverages the properties of adjacent nodes, which are obtained along with the adjacent node acquisition API, for unbiased feature estimation.

\section{Preliminaries}
In this section, we explain the foundational knowledge required for this study. In addition to several notions and definitions, the assumptions regarding APIs and OSN, as well as the fundamental concepts of Markov chains and random walks, also be discussed.
\subsection{Definitions and Notations}
In this study, we use the notation of a directed graph $G=(V,E)$ to represent the social graph. $V=\{v_1,v_2,...,v_n\}$ represents the set of nodes (users), with $n$ being the total number of nodes in the graph $(n=|V|)$. $E$ is the set of directed edges, depicting the following relationships. For every edge $(v_i, v_j)$, we introduce a set of edges and reverse edges by adding $(v_j, v_i)$, denoted as $E'$. When a directed edge $(v_i, v_j)$ exists, we refer to node $v_j$ as the friend of node $v_i$, and node $v_i$ as the follower of node $v_j$. For a node $v_i \in V$, we define the set of friends as $N_\mathrm{out}(v_i)=\{v_j\in V:(v_i,v_j)\}$ and the set of followers as $N_\mathrm{in}(v_i)=\{v_j\in V:(v_j, v_i)\}$. Additionally, $N(v_i)=N_\mathrm{out}(v_i)\cup N_\mathrm{in}(v_i)$. We also define the out-degree and in-degree of each node as $d_\mathrm{out}(v_i)=|N_\mathrm{out}(v_i)|$ and $d_\mathrm{in}(v_i)=|N_\mathrm{in}(v_i)|$, respectively. Moreover, we introduce the total degree as $d_\mathrm{sum}(v_i)=d_\mathrm{in}(v_i)+d_\mathrm{out}(v_i)$, and the mutual connections between followers and friends as $d_\mathrm{in\mathchar`-out}(v_i)=|N_\mathrm{out}(v_i) \cap N_\mathrm{in}(v_i)|$. 

We define the property of node $v_i$ as $a(v_i)$. Examples of the property $a(v_i)$ include the degree of $v_i$, the number of posts, and binary labels such as bot labels.

\subsection{Model}\label{subsec-model}

In this study, we focus on the APIs which enable acquiring the degree information (number of friends, number of followers) and properties to be estimated for each adjacent node when querying information about them. To clarify, when querying the list $N(v_i)$ of adjacent nodes for node $v_i$, we assume that the out-degree $d_\mathrm{out}(v_j)$, in-degree $d_\mathrm{in}(v_j)$ and property $a(v_j)$ of any node $v_j$ within $N(v_i)$ can be obtained simultaneously. In real OSNs, X and Mastodon offer APIs that adhere to this model~\cite{twitterapi, mastodonapi}.

We treats the frequency of acquiring adjacent nodes as a cost. We assume that a single instance of adjacent node acquisition allows for the simultaneous retrieval of $N_\mathrm{in}(v_i)$ and $N_\mathrm{out}(v_i)$. Regardless of the number of adjacent nodes, we assume that all adjacent nodes can be obtained at a fixed cost of 1.

We assume that the graph $G$ is weakly connected and remains static during the random walk. Additionally, upon transitioning to a node, we store information like its degree and properties. This includes maintaining a list of adjacent nodes along with their respective degrees and properties.

\subsection{Markov Chain Basics}
This section presents an overview of Markov chains. When estimating unbiased features in OSNs through sample sequences from random walks, it is imperative to correct biases by considering the steady-state distribution of the Markov chain. Let $\mathbf{P} = \{P_{i,j}\}_{i,j \in S}$ represent the transition probability matrix of a Markov chain in the state space $S$. The theorem below is applicable to the steady-state distribution of $\mathbf{P}$.

\begin{theorem}
\label{the:mar-sta}
    In the context of a distribution $\boldsymbol{\pi}=(\pi_i)_{i\in S}$, if the condition $\pi_j = \sum_{i\in S}\pi_i P_{i,j}$ is satisfied, it indicates that the distribution $\boldsymbol{\pi}$ serves as the steady-state distribution for a Markov chain governed by the probability transition matrix $\mathbf{P}$.
\end{theorem}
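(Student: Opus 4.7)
The plan is to treat this as essentially a verification that the hypothesis is precisely the defining equation of a stationary (steady-state) distribution, then to argue invariance under one-step transition. First, I would recall the standard definition: a distribution $\boldsymbol{\pi}$ on the state space $S$ is called a steady-state (stationary) distribution of the Markov chain with transition matrix $\mathbf{P}$ if, whenever the chain's distribution at some time $t$ equals $\boldsymbol{\pi}$, its distribution at time $t+1$ equals $\boldsymbol{\pi}$ as well.

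Next I would translate this invariance requirement into coordinates. Let $X_t$ denote the state of the chain at time $t$, and suppose $\Pr[X_t = i] = \pi_i$ for every $i \in S$. By the law of total probability and the Markov property,
\begin{equation}
\Pr[X_{t+1} = j] = \sum_{i \in S} \Pr[X_t = i]\, P_{i,j} = \sum_{i \in S} \pi_i P_{i,j}.
\end{equation}
Invoking the hypothesis $\pi_j = \sum_{i \in S} \pi_i P_{i,j}$ then gives $\Pr[X_{t+1} = j] = \pi_j$ for every $j \in S$. Hence the distribution is preserved after one step, and by induction after every subsequent step, which is exactly the steady-state property.

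I would close by remarking that the identity $\pi_j = \sum_{i} \pi_i P_{i,j}$ is the row-vector eigenvalue equation $\boldsymbol{\pi} = \boldsymbol{\pi}\mathbf{P}$, making the content of the theorem a restatement of the standard stationarity condition in Markov chain theory. There is no genuine obstacle in the proof itself; the main care needed is simply to make the definition of the steady-state distribution explicit so that the single-line verification above actually constitutes a proof rather than a tautology. In the write-up I would be careful to note that $\boldsymbol{\pi}$ must be a probability distribution (nonnegative entries summing to $1$) so that the conclusion matches how the theorem will be applied to random walks in later sections.
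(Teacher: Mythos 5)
The paper states this result as background in its ``Markov Chain Basics'' section and gives no proof at all; it is essentially the textbook definition of stationarity phrased as a theorem. Your verification --- that $\pi_j = \sum_{i\in S}\pi_i P_{i,j}$ is exactly the one-step invariance $\Pr[X_{t+1}=j]=\pi_j$ via the law of total probability, propagated by induction --- is the standard and correct justification, and your closing caveat that $\boldsymbol{\pi}$ must additionally be a genuine probability distribution is a worthwhile precision the paper leaves implicit. There is nothing to compare against on the paper's side; your argument is sound and complete.
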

Subsequently, Theorem \ref{the:mar-elg}~\cite{jones2004markov, levin2017markov} is applicable to the steady-state distribution $\boldsymbol{\pi}$.
\begin{theorem}
\label{the:mar-elg}
     In the case of an ergodic Markov chain determined by $\mathbf{P}$, there is a singular and unique steady-state distribution $\boldsymbol{\pi}$.
\end{theorem}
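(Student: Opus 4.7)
The plan is to prove existence and uniqueness separately, using the ergodicity assumption (irreducibility plus aperiodicity, together with positive recurrence in the countable case, which for a finite state space is automatic). The whole argument hinges on the convergence theorem for ergodic chains, namely that $\lim_{n\to\infty} (P^n)_{i,j}$ exists, is strictly positive, and is independent of the starting state $i$. Once that is in hand, both existence and uniqueness follow cheaply.

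For \textbf{existence}, I would define $\pi_j \defeq \lim_{n\to\infty} (P^n)_{i,j}$ for any fixed $i\in S$ (independence of $i$ makes this well defined). I would verify that $\sum_{j\in S} \pi_j = 1$ (in the finite case this is immediate by swapping the finite sum with the limit; in the countable case a small Fatou/dominated-convergence argument is needed). The stationarity equation $\pi_j = \sum_{i\in S}\pi_i P_{i,j}$ then follows by writing $(P^{n+1})_{k,j} = \sum_{i\in S}(P^n)_{k,i}P_{i,j}$ and letting $n\to\infty$ on both sides. By Theorem~\ref{the:mar-sta} this already shows $\boldsymbol{\pi}$ is a steady-state distribution.

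For \textbf{uniqueness}, suppose $\tilde{\boldsymbol{\pi}}$ is any steady-state distribution. Iterating the defining equation gives $\tilde\pi_j = \sum_{i\in S}\tilde\pi_i (P^n)_{i,j}$ for every $n \ge 1$. Sending $n\to\infty$ and using the convergence theorem, the right-hand side tends to $\pi_j \sum_{i\in S}\tilde\pi_i = \pi_j$, so $\tilde\pi_j = \pi_j$ for every $j\in S$. Hence the steady-state distribution is unique.

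The \textbf{main obstacle} is the convergence theorem itself, since ergodicity is being used in a nontrivial way (aperiodicity rules out oscillations, irreducibility guarantees positivity of the limit, and positive recurrence rules out mass escaping to infinity). The cleanest route I would take is a coupling argument: run two independent copies of the chain started from $i$ and from the stationary distribution, show they meet in finite expected time by the standard coupling lemma on the product chain (which is itself irreducible and aperiodic under ergodicity), and after the coupling time the two distributions coincide, giving $\lvert (P^n)_{i,j} - \pi_j\rvert \to 0$. In the finite case, an equivalent and shorter option is a Perron--Frobenius argument: the stochastic matrix $\mathbf{P}$ has spectral radius $1$, aperiodic irreducibility makes $1$ a simple eigenvalue strictly dominating all other eigenvalues in modulus, and the associated left eigenvector normalized to a probability vector is the unique $\boldsymbol{\pi}$.
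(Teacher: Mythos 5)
The paper does not actually prove Theorem~\ref{the:mar-elg}: it is stated as a standard background fact and attributed to textbook references (Jones; Levin--Peres), so there is no in-paper argument to compare yours against. Your sketch is a correct outline of the standard proof. Both existence and uniqueness are reduced to the convergence theorem $\lim_{n\to\infty}(P^n)_{i,j}=\pi_j$ (independent of $i$), and the two deductions you draw from it are sound: existence by passing to the limit in $(P^{n+1})_{k,j}=\sum_i (P^n)_{k,i}P_{i,j}$ and invoking Theorem~\ref{the:mar-sta}, uniqueness by iterating $\tilde\pi_j=\sum_i\tilde\pi_i (P^n)_{i,j}$ and letting $n\to\infty$. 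The only delicate points are the limit--sum interchanges in the countable-state case, which you correctly flag; for the chains this paper actually uses (state space $\Omega$ with $|\Omega|<\infty$, as noted in the proof of Theorem~\ref{the:mal}) the finite-state version suffices, all interchanges are trivial, and your Perron--Frobenius alternative gives the shortest self-contained argument. Your coupling route to the convergence theorem is also standard and correct, provided you note that aperiodicity of the original chain is what makes the product chain irreducible, which is where ergodicity enters essentially.
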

To illustrate the convergence of sample sequences from a Markov chain to the desired features for estimation, we refer to Theorem \ref{the:mar-str}~\cite{chen2016general, lee2012beyond}, which relies on the strong law of large numbers.
\begin{theorem}
\label{the:mar-str}
     Consider a finite state space $S$ with an ergodic Markov chain $\{X_s\}$, and let $\boldsymbol{\pi}$ be its steady-state distribution. For any function $f:S\rightarrow\mathbb{R}$, the following holds as time $t$ approaches infinity, irrespective of the initial state.
    \[
    \frac{1}{t}\sum_{s=1}^{t}f(X_s) \rightarrow \sum_{i\in S}\pi _i f(i) \text{ almost surely (a.s.)}
    \]
\end{theorem}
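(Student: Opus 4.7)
The plan is to reduce the ergodic theorem for Markov chains to the classical strong law of large numbers for i.i.d. sequences, by exploiting the regenerative structure that comes from repeated visits to a fixed state. Because $S$ is finite and $f:S\to\mathbb{R}$, I can rewrite the time average as a finite linear combination of empirical visit frequencies:
\[
\frac{1}{t}\sum_{s=1}^{t} f(X_s) \;=\; \sum_{i\in S} f(i)\cdot \frac{V_t(i)}{t}, \qquad V_t(i)\defeq \sum_{s=1}^{t}\1\{X_s=i\}.
\]
Since $|S|<\infty$, it therefore suffices to prove that $V_t(i)/t \to \pi_i$ almost surely for each fixed $i\in S$; linearity then yields the claimed limit $\sum_i \pi_i f(i)$.

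To prove $V_t(i)/t \to \pi_i$, I would fix a state $i$ and consider the successive hitting times $T^{(1)}_i < T^{(2)}_i < \dots$ of $i$. Because $S$ is finite and the chain is ergodic, each state is positive recurrent, so these hitting times are almost surely finite. The strong Markov property applied at each $T^{(k)}_i$ shows that the inter-arrival increments $\tau_k \defeq T^{(k+1)}_i - T^{(k)}_i$ are i.i.d.\ copies of the return time to $i$ starting from $i$. The classical strong law of large numbers for i.i.d.\ sequences then gives
\[
\frac{T^{(k)}_i}{k} \;\longrightarrow\; \mathbb{E}_i[\tau_1] \quad\text{a.s.}
\]
Inverting this (using that $V_t(i)$ is the generalized inverse of $k \mapsto T^{(k)}_i$, and sandwiching $t$ between $T^{(V_t(i))}_i$ and $T^{(V_t(i)+1)}_i$) yields $V_t(i)/t \to 1/\mathbb{E}_i[\tau_1]$ almost surely.

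The main obstacle, and the only nontrivial identification left, is to show that $1/\mathbb{E}_i[\tau_1] = \pi_i$. This is the standard ``return-time formula'' for positive recurrent chains. One clean way to establish it is to define $\tilde\pi_i \defeq 1/\mathbb{E}_i[\tau_1]$ for each $i$, verify by a Wald-type argument over one excursion from $i$ that the vector $\tilde{\boldsymbol{\pi}}$ is stationary (i.e.\ satisfies $\tilde\pi_j = \sum_i \tilde\pi_i P_{i,j}$), and then invoke Theorem~\ref{the:mar-elg} to conclude $\tilde{\boldsymbol{\pi}} = \boldsymbol{\pi}$. Once this identification is in hand, combining it with the previous paragraph yields $V_t(i)/t \to \pi_i$ a.s.\ for every $i$, and summing over $i \in S$ completes the proof. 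The fact that the limit does not depend on the initial state is automatic, since any transient prefix before the first visit to $i$ is negligible as $t\to\infty$.
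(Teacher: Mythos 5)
The paper does not actually prove this statement: Theorem~\ref{the:mar-str} is quoted as known background and attributed to the literature on Markov chain ergodic theorems, so there is no in-paper argument to match yours against. Judged on its own, your proposal is the standard self-contained proof via regeneration, and it is correct in outline: the reduction $\frac{1}{t}\sum_{s=1}^{t}f(X_s)=\sum_{i\in S}f(i)\,V_t(i)/t$ is exactly the right way to exploit finiteness of $S$, the i.i.d.\ structure of the return-time increments follows from the strong Markov property (with the first, possibly differently distributed, excursion correctly dismissed as asymptotically negligible), and the sandwich $T_i^{(V_t(i))}\le t<T_i^{(V_t(i)+1)}$ inverts the SLLN cleanly. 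The one step you should flesh out is the identification $1/\mathbb{E}_i[\tau_1]=\pi_i$: it is not immediate that the vector $\bigl(1/\mathbb{E}_j[\tau_1^{(j)}]\bigr)_{j\in S}$ satisfies the balance equations directly. The standard route is to fix one reference state $i$, define the excursion occupation measure $\mu_i(j)=\mathbb{E}_i\bigl[\sum_{s=0}^{\tau_1-1}\mathbf{1}\{X_s=j\}\bigr]$, check that $\mu_i$ is invariant, normalize by $\sum_j\mu_i(j)=\mathbb{E}_i[\tau_1]$, and invoke the uniqueness in Theorem~\ref{the:mar-elg}; alternatively, once you know $V_t(i)/t$ converges a.s.\ to a constant, starting the chain from $\boldsymbol{\pi}$ and applying bounded convergence to $\mathbb{E}_\pi[V_t(i)/t]=\pi_i$ identifies the limit with no further computation. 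Either repair closes the only soft spot; the rest of your argument stands.
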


\subsection{Random Walk Sampling}
This study presents a novel sampling technique based on random walks, elucidating the methodology for estimating unbiased features in OSNs. This section provides an overview of a simple random walk (SRW).

SRW sampling entails the progression from an initially selected node to a randomly chosen neighboring node. Representing the transition probability from node $v_i$ to node $v_j$ in SRW as $P_{i, j}$,
\begin{equation}
    P_{i, j} = 
    \begin{cases}
    \frac{1}{d_\mathrm{sum}(v_i)} & v_j \in N(v_i),\\
    0 & \text{otherwise.}
    \end{cases}
\end{equation}

In the context of random walk sampling, the transition probabilities for each node are determined through mathematical analysis. Specifically, the distribution $\boldsymbol{\pi}$ illustrating transition probabilities to each node after $t$ steps in SRW is represented as $\boldsymbol{\pi} = (Pr[x_t=1], Pr[x_t=2],...,Pr[x_t=n])$, where $Pr[A]$ denotes the probability of event $A$. If we denote the $i$-th element of $\boldsymbol{\pi}$ as $\pi_i$, it is established that $\pi_i$ converges to $d_\mathrm{sum}(v_i)/2|E|$~\cite{aldous2002reversible}. As a result, in SRW, the transition probability to each node is directly proportional to its degree. Exploiting this characteristics of SRW allows us to obtain unbiased features for the entire graph~\cite{gjoka2011practical}.

Many studies on sampling methods involving random walks have traditionally assessed accuracy based on the size of the sample sequence~\cite{gjoka2011practical, lee2012beyond, hardiman2013estimating}. In contrast, our study adopts a different perspective by considering the acquisition of a list of neighboring nodes as a cost. This approach, aligned with the methodology of Iwasaki et al.~\cite{iwasaki2018comparing}, stems from the realization that obtaining the list of neighboring nodes can pose a practical bottleneck in OSN sampling. As outlined in Section \ref{subsec-model}, the acquisition of neighboring nodes is assumed to carry a cost of 1, and the out-degree, in-degree and properties of each node in the obtained list of neighboring nodes can be determined. Of note, this information is not utilized in the SRW process.

\section{Proposed Method}
In this section, we present a sampling algorithm based on random walks that utilizes the properties of each acquired adjacent node during the process of obtaining adjacent nodes. We discuss the Markov chain aspect within our proposed method to elucidate the algorithms employed for estimating these features.

The features on the OSN, which our proposed method can estimate are derived from properties obtained concurrently with the acquisition of adjacent nodes. In specific terms, when retrieving the list of adjacent nodes for node $v_i$, if we can get the property $a(v_j)$ for any node $v_j$ in the adjacent node list $N(v_i)$, our proposed method can estimate average and distribution of the property.

\subsection{Probabilistic Addition of Adjacent Nodes to the Sample Sequence}
\begin{figure}[t]
    \centering
    \includegraphics[width = 0.9\linewidth]{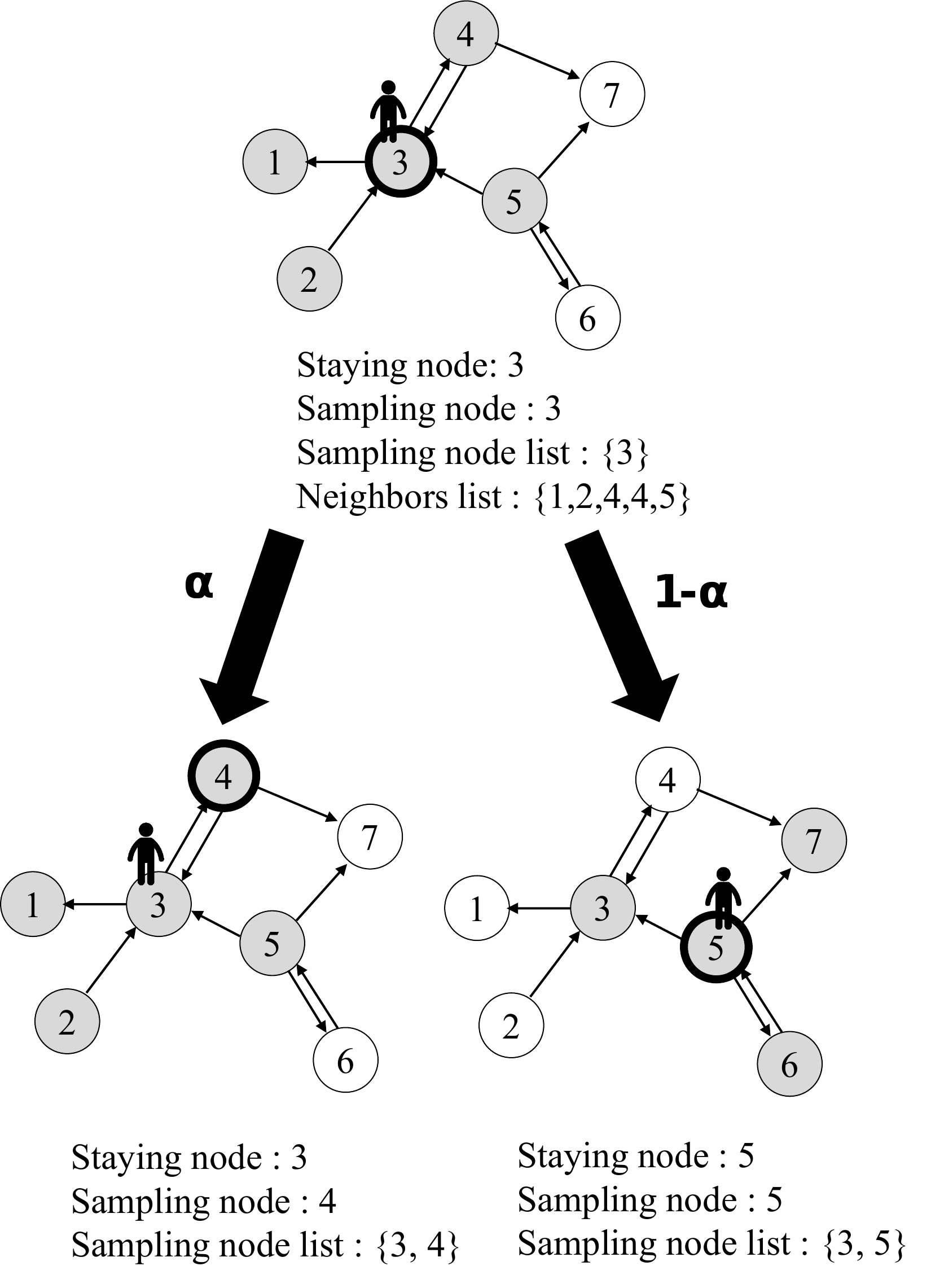}
    \caption{Overview of proposed method, while gray nodes denote nodes capable of acquiring degree information and properties.}
    \label{fig:ours}
\end{figure}

\begin{figure}[!t]
    \begin{algorithm}[H]
        \caption{Proposed sampling algorithm}
        \label{alg:ours}
        \begin{algorithmic}[1]    
        \REQUIRE $n_0$ : initial node, $\alpha$ : parameter$(0\leq \alpha < 1)$, $b$ : number of queries
        \ENSURE $sampling\_node\_list$ : sample sequence
        \STATE $v_i \leftarrow n_0$
        \STATE $sampling\_node\_list \leftarrow \{v_i\} $
        \STATE $staying\_node\_list \leftarrow \{ {v_i}\}$
        \STATE $query\_count \leftarrow 1$
        \WHILE{$query\_count < b$}
            \STATE $N(v_i) \leftarrow $ A list obtained by concatenating $N_\mathrm{out}(v_i)$ and $N_\mathrm{in}(v_i)$
            \STATE $p\leftarrow$ Random number generated between 0 and 1
            \WHILE{$p < \alpha$}
                \STATE $v_j \leftarrow$ A node uniformly selected at random from $neighbors$
                \STATE Append $v_j$ to $sampling\_node\_list$
                \STATE $p\leftarrow$ Random number generated between 0 and 1
            \ENDWHILE
            \STATE $v_i \leftarrow$ A node uniformly selected at random from $neighbors$
            \IF{$v_i \notin staying\_node\_list$}
            \STATE $query\_count \leftarrow query\_count + 1$
            \ENDIF
            \STATE Append $v_i$ to $sampling\_node\_list$
            \STATE Append $v_i$ to $staying\_node\_list$
            
        \ENDWHILE
        \RETURN $sampling\_node\_list$
        \end{algorithmic}
    \end{algorithm}
\end{figure}

We present an overview of our proposed sampling algorithm in Figure \ref{fig:ours}, while the detailed steps are outlined in Algorithm \ref{alg:ours}. The fundamental transition method in our algorithm closely resembles a standard random walk. However, after transitioning to node $v_i$ and its addition to the sample sequence, a distinctive element is introduced. With a probability $\alpha$, instead of the typical transition, we incorporate a randomly chosen node from the acquired adjacent nodes into the sample sequence. Here, $\alpha$ is a parameter in the range $0 \leq \alpha < 1$, and the transition from node $v_i$ to node $v_j$ occurs with a probability of $1-\alpha$. In this paper, we refer to the node reached after the transition as the \textit{staying node}, the node added to the sample sequence as the \textit{sampling node}, the operation executed with a probability $\alpha$ as \textit{adjacent node sampling}, and the operation performed with a probability $1-\alpha$ as \textit{transition sampling}. In adjacent node sampling, the staying node remains unchanged, and the sampling node is randomly selected from the adjacent nodes of the staying node. In transition sampling, the staying node is updated, and the sampling node becomes identical to the staying node.

In this study, we treat OSNs as directed graphs, allowing transitions and sampling to occur even on reverse edges. This occurs because, in real OSNs when transitioning between users, it is possible to choose users who transition from both followers and friends. Consequently, the list of adjacent nodes for a node $v_i$ is a combination of its in-neighbors $N_\mathrm{in}(v_i)$ and out-neighbors $N_\mathrm{out}(v_i)$. In this context, the list of adjacent nodes for node $v_i$ is an array that allows duplicates. Specifically, for a node $v_j$, if $v_j \in N_\mathrm{out}(v_i) \cap N_\mathrm{in}(v_i)$, then node $v_i$ has two occurrences of node $v_j$ in its list of adjacent nodes. Because the node to transition or sample is uniformly and randomly chosen from the list of adjacent nodes, a node appearing twice in the list has double the probability of being chosen compared to a node that appears only once.

In our study, sampling adjacent nodes is a cost-free process because we define the acquisition frequency of adjacent nodes as the cost. This operation randomly adds nodes to the sample sequence from the existing list of adjacent nodes, incurring no new expenses. In contrast, transition sampling involves obtaining the adjacent nodes of the transition destination node, incurring a cost. However, we assume that the information about once-acquired adjacent nodes is stored (Section \ref{subsec-model}). Consequently, if a node previously sampled through transition sampling is sampled again, no additional cost is incurred. Therefore, the frequency of acquiring adjacent nodes corresponds to the count of new transition sampling events.

We use the term \textit{query count} to represent the number of times the list of adjacent nodes is obtained through the new transition sampling. The proposed sampling algorithm stops when the number of acquired adjacent nodes reaches a specified query limit $b$. This limit can be established based on the OSN's API specifications, ensuring a reasonable number of queries within a given time frame. Notably, a higher query limit $b$ generally correlates with improved accuracy in estimation accuracy. The relationship between the query limit and estimation accuracy are elucidated in Section \ref{sec-exe}.

\subsection{Markov Chains in the Proposed Sampling Algorithm}
To discuss the Markov chain in the proposed sampling algorithm, we define the state space as follows.
\begin{definition}
\label{def:omg}
    \begin{equation}
        \Omega \defeq  \{(v_i,v_j) : v_i,v_j \in V \text{ s.t.}((v_i,v_j)\in E') \text{ or } (v_i = v_j)\}
    \end{equation}
\end{definition}

Consider $X_t$ as the node at step $t$, and $X_t'$ as the sampling node. Define $Z_t \defeq (X_t,X_t')\in \Omega$. The proposed sampling algorithm can be seen as a Markov chain $\{Z_t\in\Omega: t=1,2,...\}$ on the state space $\Omega$. For simplicity, let us use $e_{ij}$ to represent the state $(v_i,v_j)\in \Omega$. In this context, $e_{ij}$ denotes adjacent node sampling when $i\ne j$, indicating that the staying node is $v_i$, and the sampling node is $v_j$. The condition of $i = j$, denotes transition sampling, with both the stay and sampling nodes being $v_i (= v_j)$.

Consider $\mathrm{P}\defeq \{p(e_{ij},e_{lk}) : e_{ij},e_{lk} \in \Omega\}$ as the transition probability matrix of the Markov chain $\{Z_t \in \Omega : t = 1,2,...\}$ on the state space $\Omega$. The following theorem holds.

\begin{theorem}
\label{the:mal}
The stationary distribution $\boldsymbol{\pi}$ of $\mathrm{P}$ uniquely exists.
\end{theorem}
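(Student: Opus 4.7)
The plan is to deduce Theorem \ref{the:mal} from Theorem \ref{the:mar-elg} by establishing that the Markov chain $\{Z_t\}$ on the finite state space $\Omega$ is ergodic. Since $|\Omega|\le |V|+|E'|<\infty$, it suffices to verify irreducibility and aperiodicity of $\mathrm{P}$, after which existence and uniqueness of $\boldsymbol{\pi}$ follow immediately.

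The first step is to record the one-step transition law: from any $e_{ij}$ the next state depends only on the staying coordinate $v_i$, being $e_{il}$ for a uniformly chosen $v_l\in N(v_i)$ with probability $\alpha$ (adjacent sampling) and $e_{ll}$ for a uniformly chosen $v_l\in N(v_i)$ with probability $1-\alpha$ (transition sampling). Consequently the staying-coordinate process $\{X_t\}$ executes a random walk on $V$ under the symmetric neighbor relation induced by $E'$, which is irreducible on $V$ by the weak-connectivity assumption on $G$. To upgrade this to irreducibility of $\mathrm{P}$ on $\Omega$ (in the generic case $\alpha>0$), from an arbitrary source $e_{ij}$ I would route the staying node to any desired $v_l$ via a sequence of transition samplings (each of positive probability because $\alpha<1$), and then execute one final adjacent sampling step of probability $\alpha/d_\mathrm{sum}(v_l)$ to reach $e_{lk}$ for any chosen $v_k\in N(v_l)$, or one extra transition sampling step to land at $e_{ll}$.

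For aperiodicity, I would fix any $e_{ii}$ with $v_i$ possessing some neighbor $v_l$ (guaranteed by connectivity) and exhibit two return loops of coprime length. Two consecutive transition samplings give the length-2 loop $e_{ii}\to e_{ll}\to e_{ii}$, using $v_l\in N(v_i)$ together with the symmetry of $E'$ to guarantee $v_i\in N(v_l)$. One adjacent sampling followed by two transition samplings gives the length-3 loop $e_{ii}\to e_{il}\to e_{ll}\to e_{ii}$. Hence the greatest common divisor of return times divides $\gcd(2,3)=1$.

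The main obstacle I anticipate is the boundary case $\alpha=0$: adjacent sampling never occurs, so the length-3 cycle above collapses, the states $e_{ij}$ with $i\ne j$ become unreachable after the first step (breaking full irreducibility on $\Omega$), and the chain may even be periodic when the underlying graph is bipartite, so Theorem \ref{the:mar-elg} no longer applies out of the box. The cleanest workaround is to observe that $\mathrm{P}$ always possesses a unique closed communicating class---all of $\Omega$ when $\alpha>0$, and $\{e_{ii}:v_i\in V\}$ when $\alpha=0$---and to invoke the standard fact that any finite Markov chain with a single closed communicating class admits a unique stationary distribution, with transient states necessarily carrying zero mass. This recovers Theorem \ref{the:mal} uniformly on the full parameter range $0\le\alpha<1$.
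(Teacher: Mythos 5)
Your proof is correct and follows the same basic route as the paper: finiteness of $\Omega$, irreducibility from weak connectivity via $E'$, aperiodicity, and then Theorem \ref{the:mar-elg}. However, your version is substantially more careful than the paper's in two places where the paper's argument is actually thin. First, the paper asserts that reachability of every state from every state "eliminates periodicity," which is not a valid inference (irreducibility does not imply aperiodicity); your explicit exhibition of return loops of lengths $2$ and $3$ at a diagonal state is the argument that is actually needed, and it works precisely because $\alpha>0$ makes the length-$3$ loop available. Second, you are right that the boundary case $\alpha=0$ (which the paper's parameter range $0\le\alpha<1$ permits) breaks both irreducibility on all of $\Omega$ and, on bipartite graphs, aperiodicity of the diagonal chain; your fallback to the unique-closed-communicating-class criterion correctly salvages existence and uniqueness of $\boldsymbol{\pi}$ there, since uniqueness of the stationary distribution does not require aperiodicity. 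The paper silently ignores this case, so your treatment is strictly more complete.
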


\begin{proof}
First, because $|V|<\infty$, it follows that $|\Omega|<\infty$. In our study, we assume that the target graph is a weakly connected directed graph. By considering transition in the reverse edge direction, we can transition to any node and sample any adjacent node. This allows reaching any state from any state within $\Omega$, eliminating periodicity. Thus, $\mathrm{P}$ is ergodic, quaranteeing a unique stationary distribution $\boldsymbol{\pi}$, as outlined in Theorem \ref{the:mar-elg}.
\end{proof}

Subsequently, a function $m$ is defined for the state $e_{ij}$ as follows.
\begin{definition}
    \begin{equation}
        m(e_{ij}) \defeq \1 _{v_j\in N_\mathrm{out}(v_i)} + \1 _{v_j\in N_\mathrm{in}(v_i)} 
    \end{equation}
\end{definition}

The function $m$ represents the number of directed edges between the staying node $v_i$ and the sampling node $v_j$ in the state $e_{ij}$. Thus, it holds that $m(e_{ij}) = m(e_{ji})$.

This implies that in our proposed sampling algorithm, while remaining at the node $v_i$, the adjacent node sampling probability $p_{neighbor}$ of staying at node $v_i$ and sampling adjacent node $v_j$ is presented as follows.

\begin{definition}
    \label{def:pn}
    \begin{equation}
    p_{neighbor}(v_j;v_i)=\alpha \cdot \frac{m(e_{ij})}{d_\mathrm{sum}(v_i)}
\end{equation}
\end{definition}

Likewise, when staying at node $v_i$, the transition sampling probability $p_{walk}$ of transitioning and sampling an adjacent node $v_j$ is determined as follows.

\begin{definition}
\label{def:pw}
    \begin{equation}
    p_{walk}(v_j;v_i)=(1-\alpha) \cdot \frac{m(e_{ij})}{d_\mathrm{sum}(v_i)}
\end{equation}
\end{definition}

The transition probability matrix $\mathbf{P}$ for the proposed sampling algorithm is given as follows.

\begin{definition}
\label{def:p}
\begin{equation}
    P(e_{ij},e_{lk})=
    \begin{cases}
    p_{neighbor}(v_k;v_i) & i=l,\\
    p_{walk}(v_k;v_i) & l=k, v_l\in N(v_i), \\
    0 & \text{otherwise.}
    \end{cases}
\end{equation}
\end{definition}

Now, we establish the following theorem regarding the stationary distribution of the Markov chain $\{Z_t : t = 1, 2, ...\}$ in the proposed sampling algorithm.

\begin{theorem}
\label{the:sta}
    When the transition probability matrix of the Markov chain $\{Z_t : t = 1, 2, ...\}$ within the state space $\Omega$ is defined as presented in Definition \ref{def:p}, the stationary distribution $\boldsymbol{\pi}$ of this Markov chain is given as follows.
    \begin{equation}
    \pi (e_{ij}) = 
    \begin{cases}
        \alpha \cdot \frac{m(e_{ij})}{2|E|} & i\neq j \\
        (1-\alpha) \cdot \frac{d_\mathrm{sum}(v_i)}{2|E|} & i=j 
    \end{cases}
\end{equation}
\end{theorem}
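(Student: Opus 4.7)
The plan is to invoke Theorem~\ref{the:mar-sta}: since Theorem~\ref{the:mal} already guarantees uniqueness of the stationary distribution, it suffices to check the balance equation $\pi(e_{lk})=\sum_{e_{ij}\in\Omega}\pi(e_{ij})P(e_{ij},e_{lk})$ for every state $e_{lk}\in\Omega$ with the candidate $\boldsymbol{\pi}$ given in the theorem. Inspection of Definition~\ref{def:p} shows that the only two kinds of incoming transitions to $e_{lk}$ are (i) adjacent-node sampling moves from states of the form $e_{lj}$ that keep the staying node at $v_l$, which contribute only when $l\neq k$ would leave us in an adjacent-sampling state but also contribute to $e_{ll}$ via the term $p_{neighbor}(v_l;v_l)=0$ whenever $v_l\notin N(v_l)$; and (ii) transition-sampling moves from any state $e_{ij}$ with $v_l\in N(v_i)$ into the self-loop state $e_{ll}$. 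I will split the verification accordingly.

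The key reusable computation is the marginal over the second coordinate:
\begin{equation}
\sum_{j\in V}\pi(e_{ij})=\pi(e_{ii})+\sum_{j\neq i}\pi(e_{ij})=(1-\alpha)\frac{d_\mathrm{sum}(v_i)}{2|E|}+\alpha\sum_{j\neq i}\frac{m(e_{ij})}{2|E|}=\frac{d_\mathrm{sum}(v_i)}{2|E|},
\end{equation}
where the last equality uses $\sum_{j\neq i}m(e_{ij})=d_\mathrm{out}(v_i)+d_\mathrm{in}(v_i)=d_\mathrm{sum}(v_i)$ (assuming no self-loops in $G$, which is implicit in the separate handling of $i=j$). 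This marginal says that the staying node obeys the usual SRW-on-degree distribution, which is intuitively what one expects.

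With this marginal in hand, the two cases become short. For $l\neq k$, only the adjacent-sampling contributions survive and $\sum_{j}\pi(e_{lj})\cdot p_{neighbor}(v_k;v_l)=(d_\mathrm{sum}(v_l)/(2|E|))\cdot\alpha\,m(e_{lk})/d_\mathrm{sum}(v_l)=\alpha\,m(e_{lk})/(2|E|)$, matching $\pi(e_{lk})$. For $l=k$, only the transition-sampling contributions survive and, summing over all $i$ and $j$, the $\sum_j$ collapses via the marginal identity to cancel the $d_\mathrm{sum}(v_i)$ in the denominator of $p_{walk}(v_l;v_i)$, leaving $\sum_i(1-\alpha)m(e_{il})/(2|E|)=(1-\alpha)d_\mathrm{sum}(v_l)/(2|E|)$, again matching $\pi(e_{ll})$.

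I expect the main obstacle to be bookkeeping rather than any deep difficulty: one must be careful that (a) the self-loop state $e_{ii}$ is included when marginalizing and is handled by the $i=j$ branch of the candidate $\boldsymbol{\pi}$, (b) the quantity $m(e_{ij})$ correctly doubles the contribution of mutual edges so that $\sum_{j\neq i}m(e_{ij})=d_\mathrm{sum}(v_i)$ rather than $|N(v_i)|$, and (c) in the $l=k$ case the condition $v_l\in N(v_i)$ in Definition~\ref{def:p} is precisely encoded by $m(e_{il})>0$, so extending the sum over all $i\in V$ is harmless. Once these conventions are laid out explicitly, both balance equations reduce to one-line algebraic identities, and Theorem~\ref{the:mal} finishes the argument by ruling out any other stationary distribution.
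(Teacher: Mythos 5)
Your proposal is correct and follows essentially the same route as the paper: verify the balance equation $\boldsymbol{\pi}\mathbf{P}=\boldsymbol{\pi}$ case by case ($l\neq k$ via incoming adjacent-node-sampling moves, $l=k$ via incoming transition-sampling moves), using the identity $\sum_{v_j\in N(v_i)}m(e_{ij})=d_\mathrm{sum}(v_i)$ (the paper's Lemma~\ref{lemma:mdsum}) together with $m(e_{ij})=m(e_{ji})$, and then invoking uniqueness from Theorem~\ref{the:mal}. Your marginal identity $\sum_j\pi(e_{ij})=d_\mathrm{sum}(v_i)/2|E|$ is just a slightly tidier packaging of the paper's separate $\pi(e_{ll})$ and $\sum_{j}\pi(e_{lj})$ terms, which the paper instead recombines via $\alpha^2+\alpha(1-\alpha)=\alpha$ and $(1-\alpha)^2+\alpha(1-\alpha)=1-\alpha$.
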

\begin{proof}
Refer to the appendix \ref{ape-sta}.
\end{proof}

\subsection{Feature Estimation}\label{subsec-est}
We introduce an algorithm for feature estimation in the context of the proposed sampling algorithm. In this section, we present the algorithm for estimating features using in the proposed sampling method. The primary objective is to apply weighting to the sample sequence generated by the proposed transition sampling algorithm, creating a process for estimating the expected value of features on the OSN.

Let $f$ be any function $f : V \rightarrow \mathbb{R}$, and consider the uniform distribution $\mathbf{u} \defeq [u(1),u(2),...,u(n)] = [1/n,1/n,...,1/n]$. The expected value $\mathbb{E}_u(f)$ of the feature to be estimated on the OSN is given as follows.

\begin{definition}
\label{def:eu}
    $\mathbb{E}_u(f) \defeq \frac{1}{n}\sum_{v\in V}f(v)$
\end{definition}

To obtain the expected value of the desired feature on the OSN, this can be accomplished by appropriately defining the function $f$. For instance, if you wish to estimate the out-degree distribution $\mathbb{P}\{D_G = d\}, (d = 1, 2, ..., n-1)$ of a graph $G$, selecting the function $f(v) = \1_{d_\mathrm{out}(v) = d}$ would be suitable.

To define a reweighting process for obtaining $\mathbb{E}_u(f)$ from the sample sequence $\{Z_s\}_{s=1}^{t}$ generated by the proposed sampling algorithm, a new function $g : \Omega \rightarrow \mathbb{R}$ is introduced for the function $f$ as follows.

\begin{definition}
\label{def:g}
    $g(e_{ij}) \stackrel{\mathrm{def}}{=}f(v_j)$
\end{definition}

The function $g$ applies the function $f$ to the sampling nodes in the sample $Z_t$.

Next, we define a weighting function as follows.

\begin{definition}
\label{def:w}
    $w(e_{ij}) \stackrel{\mathrm{def}}{=} \frac{1}{d_\mathrm{sum}(v_j)}$
\end{definition}

In this study, we assume that the number of friends $d_\mathrm{out}(v_j)$ and followers $d_\mathrm{in}(v_j)$ of sampling nodes are accessible. Consequently, we can obtain $d_\mathrm{sum}(v_j)$ for the sampling node $v_j$.

Here, the following theorem holds.

\begin{theorem}
\label{the:fea}
For the sample sequence $\{Z_s\}_{s=1}^{t}$ obtained from the proposed transition algorithm, as $t\rightarrow \infty$,
\begin{equation}
        \frac{\sum_{s=1}^tw(Z_s)g(Z_s)}{\sum_{s=1}^tw(Z_s)} \rightarrow \mathbb{E}_u(f) \text{ a.s.}
    \end{equation}
\end{theorem}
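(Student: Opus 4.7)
The plan is to apply the strong law of large numbers for ergodic Markov chains (Theorem \ref{the:mar-str}) separately to the numerator and denominator, and then take the ratio. Since Theorem \ref{the:mal} guarantees that $\{Z_t\}$ is ergodic on the finite state space $\Omega$ with unique stationary distribution $\boldsymbol{\pi}$ given explicitly by Theorem \ref{the:sta}, the functions $wg : \Omega \to \mathbb{R}$ and $w : \Omega \to \mathbb{R}$ each satisfy the hypotheses of Theorem \ref{the:mar-str}. Therefore, almost surely as $t \to \infty$,
\begin{equation}
\frac{1}{t}\sum_{s=1}^{t} w(Z_s)g(Z_s) \;\longrightarrow\; \sum_{e_{ij}\in\Omega} \pi(e_{ij})\,w(e_{ij})\,g(e_{ij}),
\qquad
\frac{1}{t}\sum_{s=1}^{t} w(Z_s) \;\longrightarrow\; \sum_{e_{ij}\in\Omega} \pi(e_{ij})\,w(e_{ij}).
\end{equation}

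Next I would evaluate both limits explicitly using the closed forms of $\pi$, $w$, and $g$, by splitting each sum into the diagonal part ($i=j$) and the off-diagonal part ($i\neq j$). For the numerator, the diagonal part contributes
\begin{equation}
\sum_{i} (1-\alpha)\,\frac{d_{\mathrm{sum}}(v_i)}{2|E|}\cdot\frac{1}{d_{\mathrm{sum}}(v_i)}\cdot f(v_i) \;=\; \frac{1-\alpha}{2|E|}\sum_{v\in V} f(v),
\end{equation}
and for the off-diagonal part, fixing $v_j$ and summing over $i$, I would use the key identity $\sum_{i} m(e_{ij}) = d_{\mathrm{sum}}(v_j)$, which follows directly from the definition of $m$ and of $d_{\mathrm{in}}, d_{\mathrm{out}}$. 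This yields $\frac{\alpha}{2|E|}\sum_{v\in V} f(v)$, so the full numerator limit is $\frac{1}{2|E|}\sum_{v\in V} f(v)$. Repeating the same calculation with $f \equiv 1$ gives the denominator limit $\frac{n}{2|E|}$.

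Taking the ratio (which eliminates the common $1/t$ normalization and is valid since the denominator limit $n/(2|E|)$ is strictly positive, allowing the continuous mapping theorem / Slutsky-type argument to preserve the almost-sure convergence) produces
\begin{equation}
\frac{\sum_{s=1}^{t} w(Z_s) g(Z_s)}{\sum_{s=1}^{t} w(Z_s)} \;\longrightarrow\; \frac{\frac{1}{2|E|}\sum_{v\in V} f(v)}{\frac{n}{2|E|}} \;=\; \frac{1}{n}\sum_{v\in V} f(v) \;=\; \mathbb{E}_u(f),
\end{equation}
which is the desired claim. The main obstacle is really just the off-diagonal bookkeeping: one must correctly recognize that the weight $1/d_{\mathrm{sum}}(v_j)$ exactly cancels the aggregated multiplicity $\sum_i m(e_{ij}) = d_{\mathrm{sum}}(v_j)$, so that both the $\alpha$-weighted neighbor-sampling mass and the $(1-\alpha)$-weighted transition mass contribute a uniform-over-vertices term, and hence combine to the unbiased average $\mathbb{E}_u(f)$ regardless of the choice of $\alpha \in [0,1)$.
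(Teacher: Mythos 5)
Your proof is correct and follows essentially the same route as the paper's: apply the ergodic theorem for the finite ergodic chain $\{Z_t\}$ to $wg$ and $w$ separately, evaluate both stationary expectations by splitting $\Omega$ into diagonal and off-diagonal states and invoking the cancellation $\sum_{v_i\in N(v_j)} m(e_{ij}) = d_\mathrm{sum}(v_j)$ (the paper's Lemma~\ref{lemma:mdsum} combined with $m(e_{ij})=m(e_{ji})$), and then take the ratio. The only cosmetic difference is that the paper pre-normalizes both limits by $2|E|/n$ so the denominator tends to $1$, whereas you carry the factor $n/(2|E|)$ through to the final quotient; your explicit remark that the denominator limit is strictly positive is a point the paper leaves implicit.
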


\begin{proof}
    Refer to the appendix \ref{ape-fea}.
\end{proof}

As stated in Theorem \ref{the:fea}, for the sample sequence $\{Z_s\}_{s=1}^t$ obtained by the proposed sampling algorithm, the estimator \linebreak $\sum_{s=1}^tw(Z_s)g(Z_s)/\sum_{s=1}^tw(Z_s)$ converges to the expected value of the desired feature on the OSN. Though the sample $Z_t$ contains information about both sampling node $X_t'$ and the staying node $X_t$, retaining information solely about sampling node $X_t'$ is sufficient for feature estimation. Therefore, in Algorithm \ref{alg:ours}, information about the staying node $X_t$ is not returned.

The proposed method can estimate the unbiased features regarding any properties that can be obtained when acquiring adjacent nodes. This is because the estimable feature $g(e_{ij})$ for the sample $e_{ij}$ added through adjacent node sampling is derived from the property $a(v_j)$ obtained at the same time during the acquisition of adjacent nodes. The specific content of this properties varies based on the OSN's API specifications. For instance, in X, which provides degree information~\cite{twitterapi}, one can estimate the average degree and degree distribution. Similarly, in Mastodon, which provides bot rates and post counts~\cite{mastodonapi}, these parameters can also be estimated.

Thus, we have successfully developed a weighting process to estimate the expected value of features on the OSN for the sample sequence acquired through the proposed sampling algorithm.

\section{Experiment}\label{sec-exe}
\begin{table}[t]
\centering     
\caption{Dataset Overview}
\begin{tabular}{|c|c|c|c|} 
\hline                
Network & Type & Nodes & Edges \\     
\hline                
ego-Twitter & Social Network & 81,306 & 1,768,149 \\
soc-Slashdot & Social Network & 82,168 & 948,464 \\
Amazon & Product Network & 262,111 & 1,234,877 \\
DBA model & Generation Network & 100,000 & 1,000,000 \\
\hline
\end{tabular}
\label{tab:abs}
\end{table}

\begin{figure*}[t]
  \subfloat[ego-Twitter]{\includegraphics[scale=0.29]{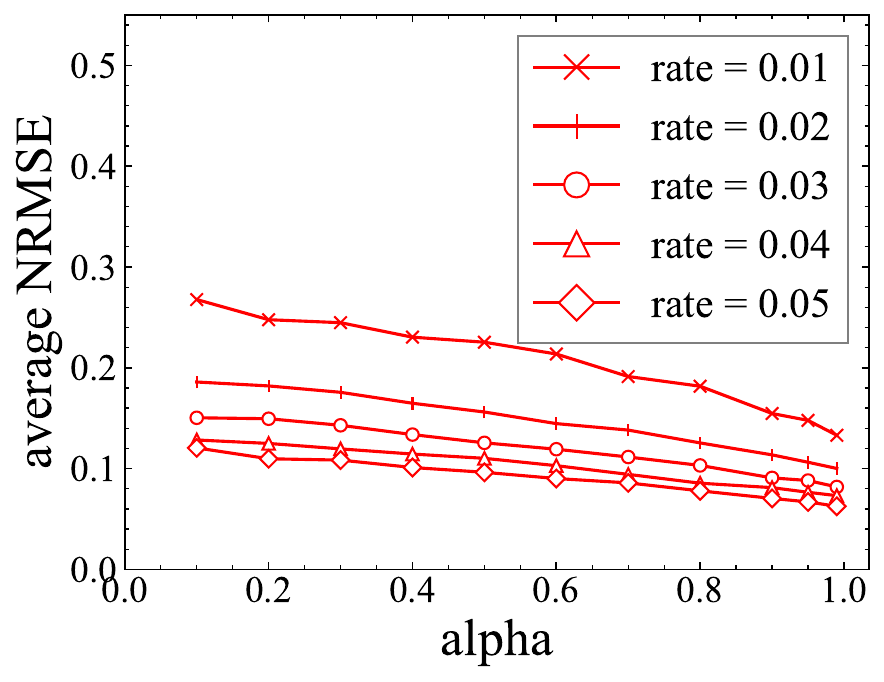}}%
  \subfloat[Slashdot]{\includegraphics[scale=0.29]{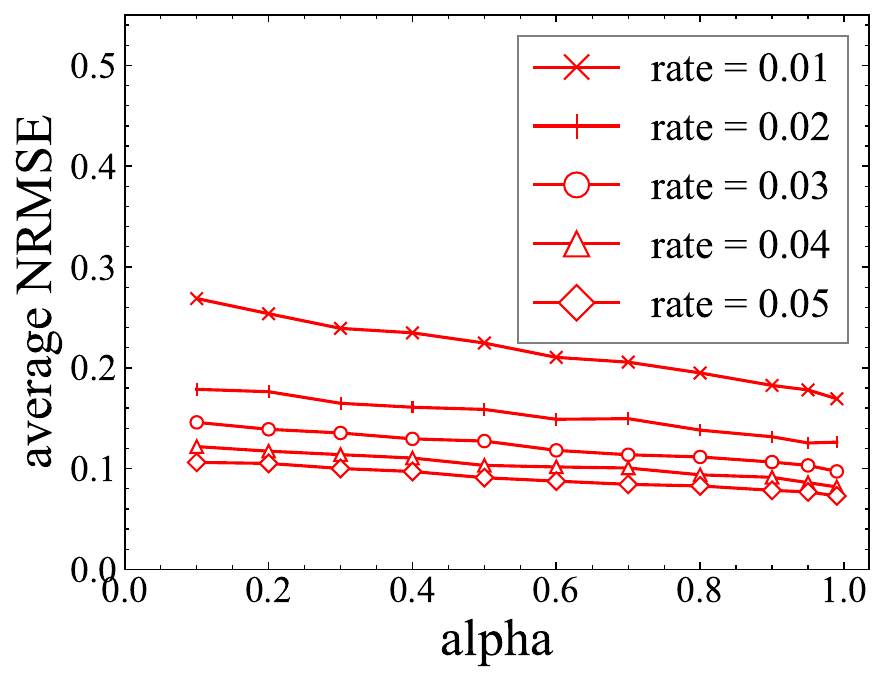}}%
  \subfloat[Amazon]{\includegraphics[scale=0.29]{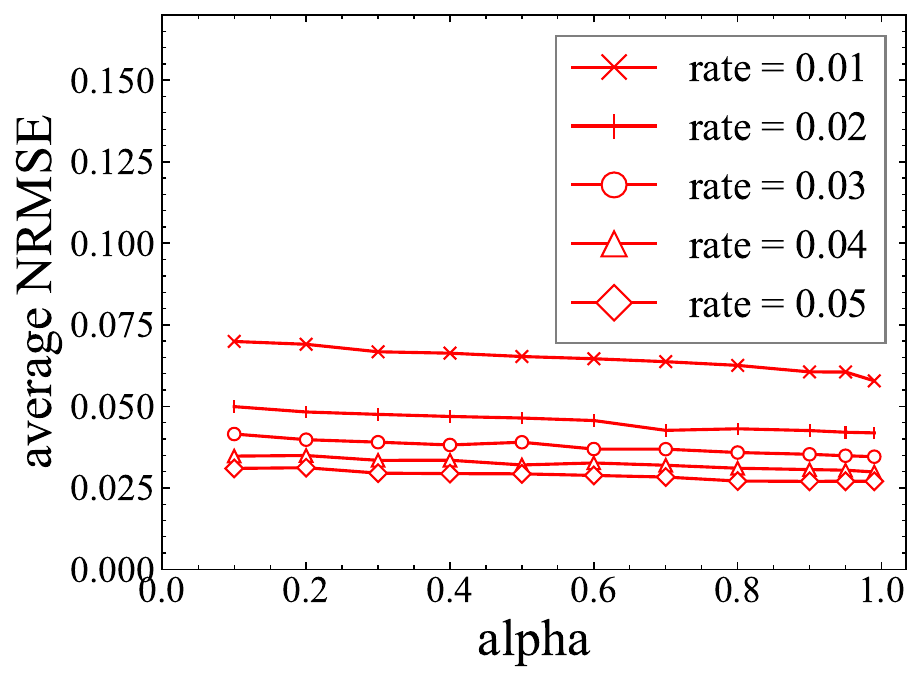}}%
  \subfloat[DBA model]{\includegraphics[scale=0.29]{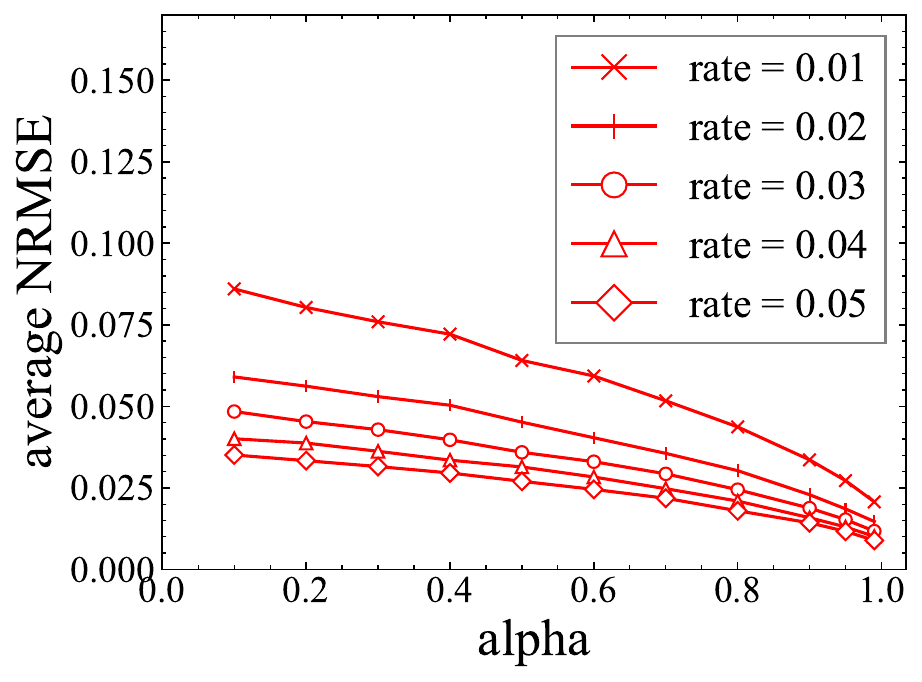}}%
\caption{Average NRMSE for each feature categorized by query rate at each $\alpha$.}
\label{fig:alpha} 
\end{figure*}

We assess the accuracy of the proposed method across various networks.  In real-world OSN sampling, the target graph is often unknown. However, for these experiments, we conduct sampling simulations on known graphs to facilitate evaluation.\footnote{Our simulation code is available at \url{https://github.com/XXXXXXXX}}

\subsection{Experimental Setup}\label{subsec-preexe}
\textbf{Dataset: }Our experiments utilized three datasets from the Stanford Large Network Dataset Collection~\cite{snap}, and we also employed the Directed-Barabasi-Albert model (DBA model)~\cite{posfai2016network}, a generative model for complex networks. The target graphs are directed, and we focus on the maximum weakly connected component. The DBA model extends the Barabasi-Albert model~\cite{posfai2016network} to directed graphs by introducing directed edges from one node to another with the following probability.
\begin{equation}
\prod d_\mathrm{in}(v_i)
 = \frac{d_\mathrm{in}(v_i) + A}{\Sigma _j(d_\mathrm{in}(v_j) + A)}.
\end{equation}
The parameter $A$ was set to $1$. Table \ref{tab:abs} provides an overview of each dataset.

\noindent\textbf{Simulation: }For the proposed sampling algorithm, the initial nodes are randomly selected from the graph, this process is independent for each simulation. All sampling simulations are independently conducted 1000 times. Query count $b$ is chosen by the proportion of the total number of nodes in the graph.

\noindent\textbf{Evaluation metrics: }The evaluation metric for each feature estimation is the Normalized Root Mean Square Error (NRMSE). NRMSE is widely employed in related studies to assess the accuracy of estimated values~\cite{lee2012beyond,hardiman2013estimating,iwasaki2018estimating,iwasaki2018comparing}; lower values indicate superior performance. It is computed as follows, where $x$ represents the true value of the feature, $\hat{x_i}$ is the estimated value of the feature in the $i$-th sampling simulation, and $N$ is the number of simulations.
\begin{equation*}
    \mathrm{NRMSE} = \frac{1}{x}\sqrt{\frac{1}{N}\sum_{i=1}^N(x-\hat{x_i})^2}.
\end{equation*}

\noindent\textbf{Features: }The features under investigation in our experiments include average out-degree, random label rate, high-degree label rate, and low-degree label rate. Random, high-degree, and low-degree labels are binary labels synthetically introduced to the dataset, simulating scenarios akin to bot labels in real OSNs. A random label is assigned randomly; a high-degree label is assigned with a probability of $\hat{d}(v_i)/\Sigma_{v_i\in V}\hat{d}(v_i)$ when $\hat{d}(v_i) = \text{max}(d_\mathrm{in}(v_i),d_\mathrm{out}(v_i))$, and a low-degree label is assigned with a probability of \linebreak $(1/\hat{d}(v_i))/\Sigma_{v_i \in V}(1/\hat{d}(v_i))$. Labels are repetitively assigned to selected nodes based on the corresponding probability until the labeled node proportion reaches $10\%$ of the total. The introduction of labeled nodes enables simulations to estimate the proportion of nodes with characteristics like randomly assigned labels, labels more prevalent in high-degree nodes, and labels more prevalent in low-degree nodes, reflecting real OSN scenarios. The labeling method follows the approach proposed by Fukuda et al.~\cite{fukuda2022estimating}.

\subsection{Relationship between $\alpha$ and Estimation Accuracy}
In the proposed method, we conducted experiments to investigate the relationship between the probability $\alpha$ of performing neighboring node sampling and estimation accuracy. Across each dataset, we varied $\alpha$ from $0.1$ to $0.9$ in increments of $0.1$. Additionally, we tested $\alpha$ values of $0.95$ and $0.99$. The query count was chosen $0.1\%$ to $0.5\%$ of the total number of nodes in the graph, adjusting in increments of $0.1\%$.

Figure \ref{fig:alpha} illustrates the average NRMSE for each feature discussed in Section \ref{subsec-preexe}. Results are presented for query counts representing $1\%, 2\%, 3\%, 4\%,$ and $5\%$ of all node counts. Figure \ref{fig:alpha} shows that as $\alpha$ approaches 1, the average NRMSE decreases across all graphs and query numbers. Therefore, the parameter $\alpha$ should be set to the largest possible value that is still less than 1.
\begin{figure*}[p]
      \subfloat[ego-Twitter]{\includegraphics[scale=0.29]{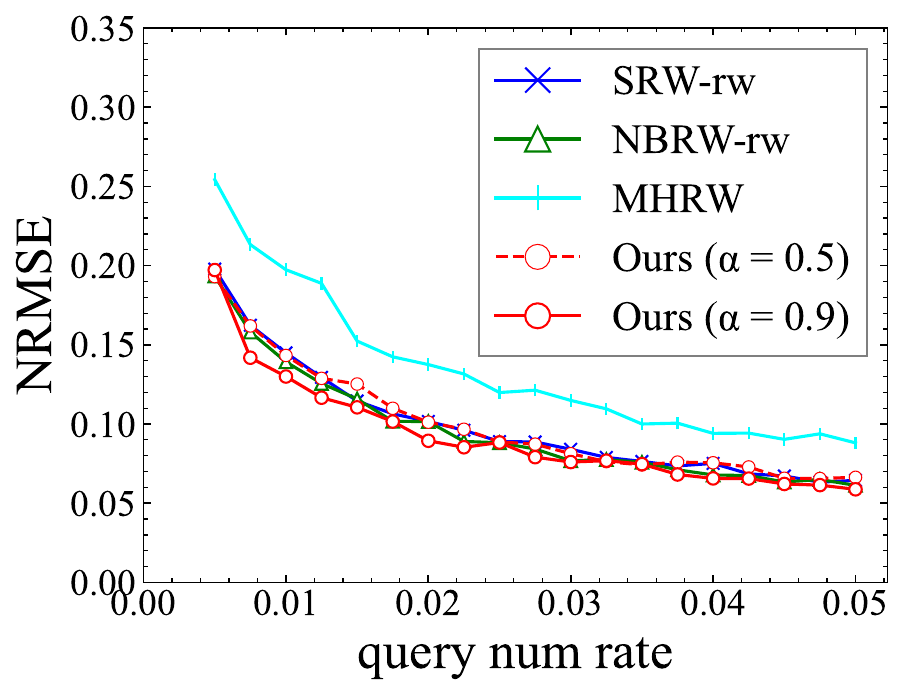}}
      \subfloat[Slashdot]{\includegraphics[scale=0.29]{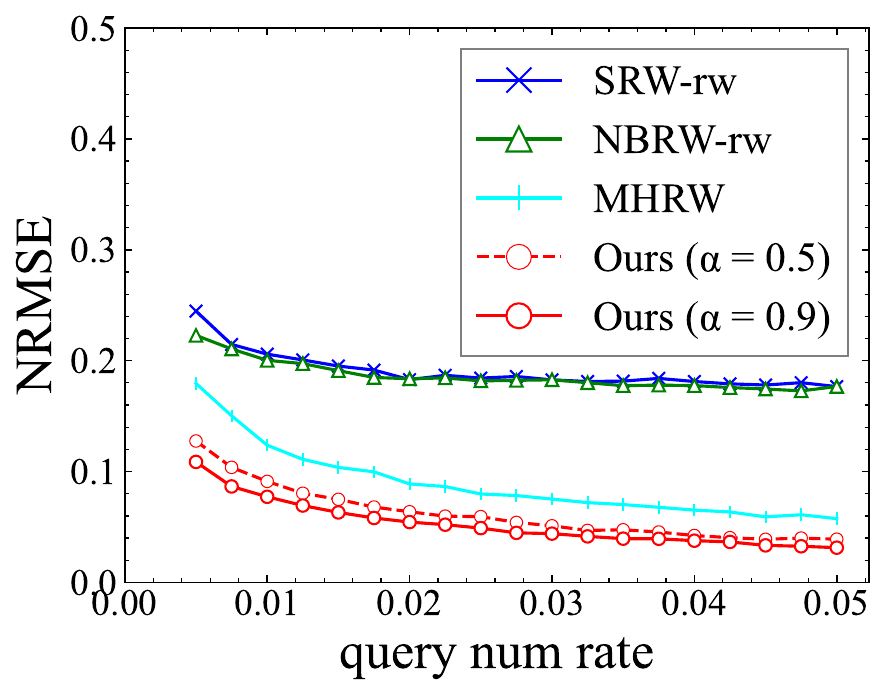}}
      \subfloat[Amazon]{ \includegraphics[scale=0.29]{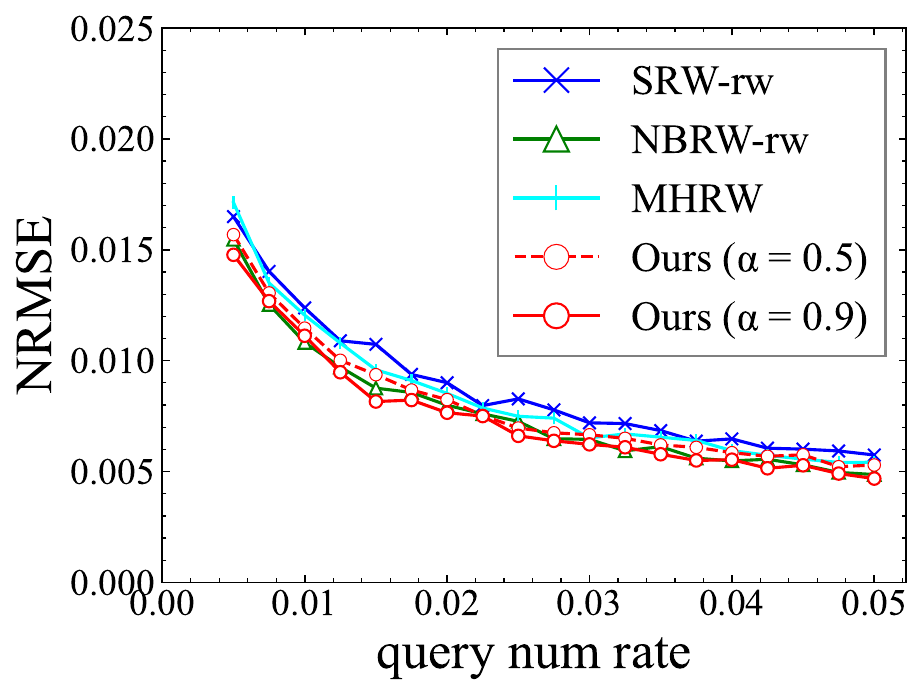}}
      \subfloat[DBA model]{\includegraphics[scale=0.29]{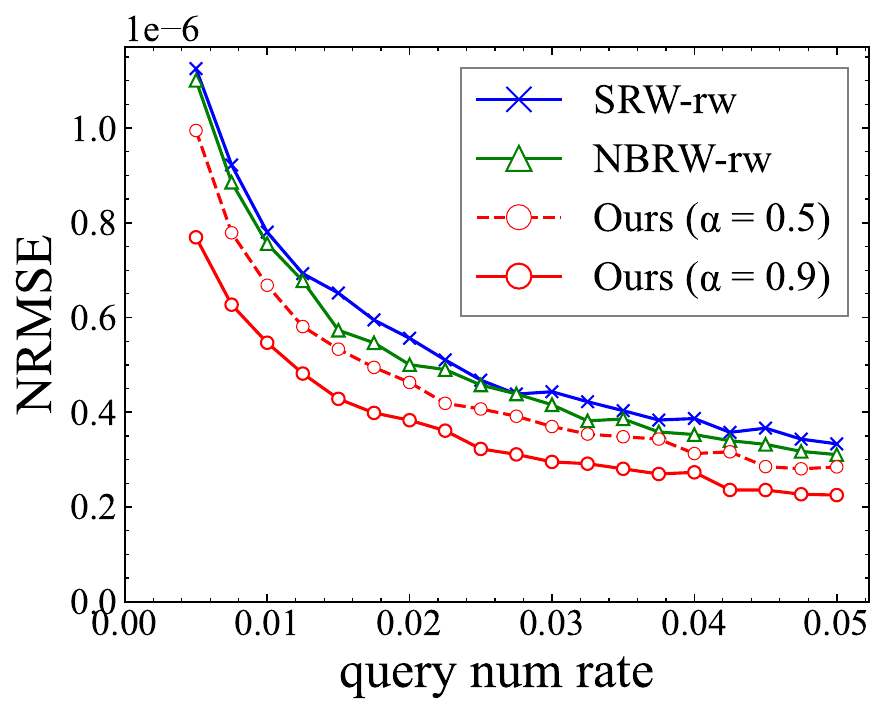}}
    \caption{NRMSE for out-degree estimation.}
    \label{fig:degree} 
  \end{figure*}

\begin{figure*}[p]
      \subfloat[ego-Twitter]{\includegraphics[scale = 0.29]{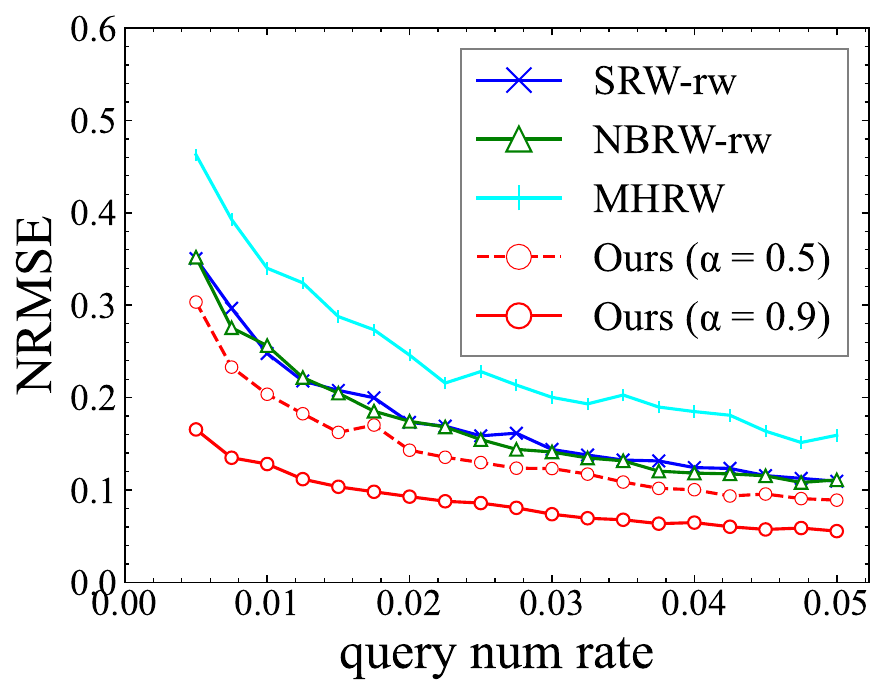}}
      \subfloat[Slashdot]{\includegraphics[scale=0.29]{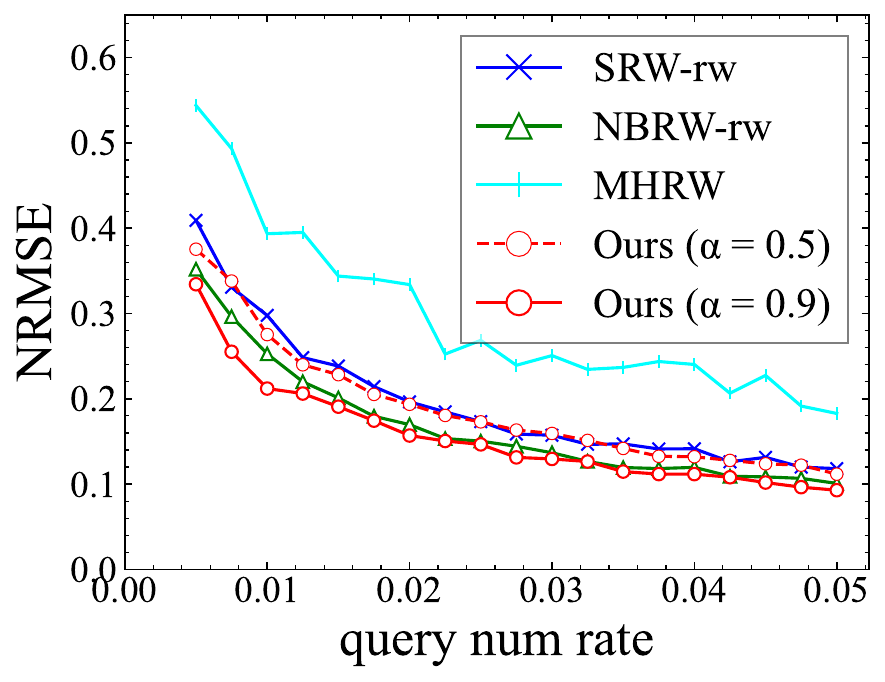}}
      \subfloat[Amazon]{\includegraphics[scale=0.29]{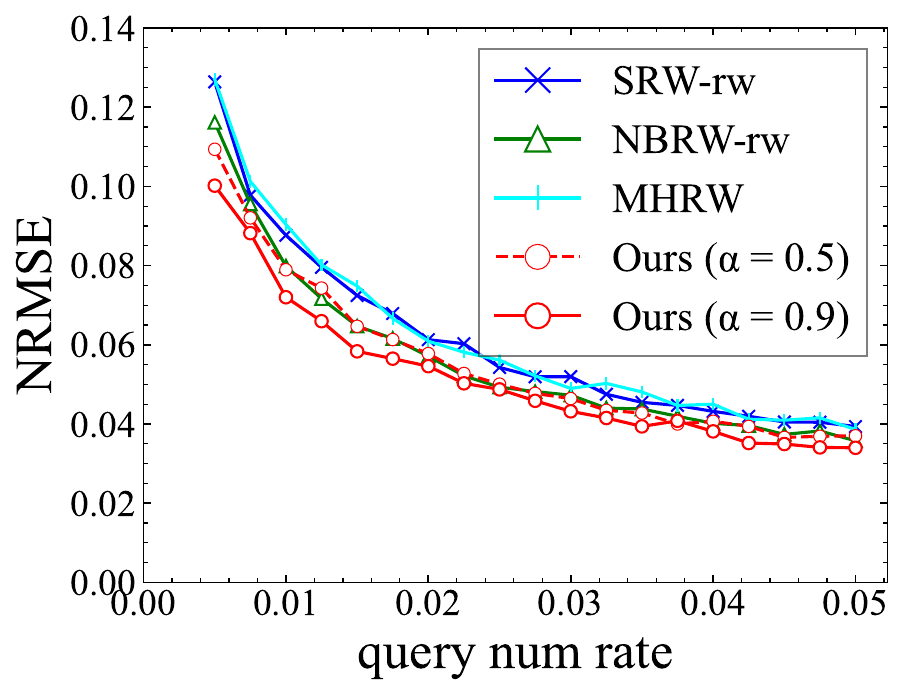}}
      \subfloat[DBA model]{ \includegraphics[scale=0.29]{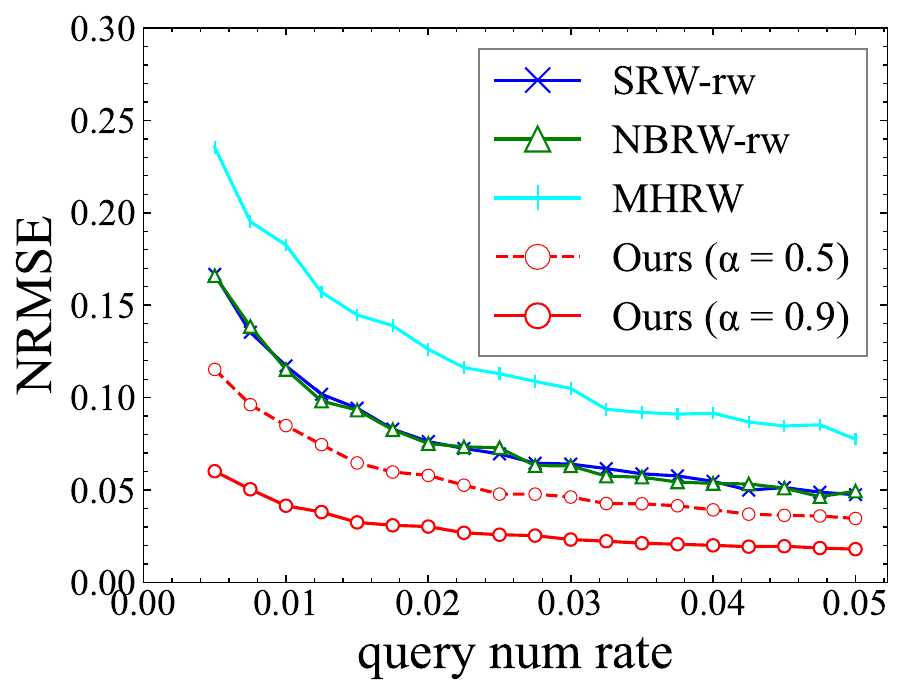}}
    \caption{NRMSE for random label estimation.}
    \label{fig:randlabel} 
  \end{figure*}

\begin{figure*}[p]
      \subfloat[ego-Twitter]{\includegraphics[scale=0.29]{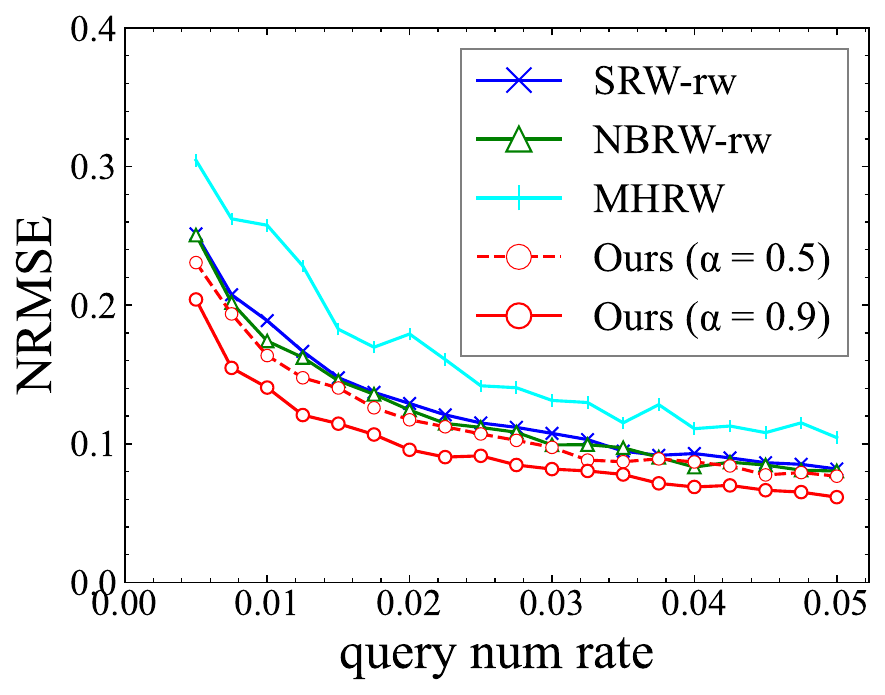}}
      \subfloat[Slashdot]{\includegraphics[scale=0.29]{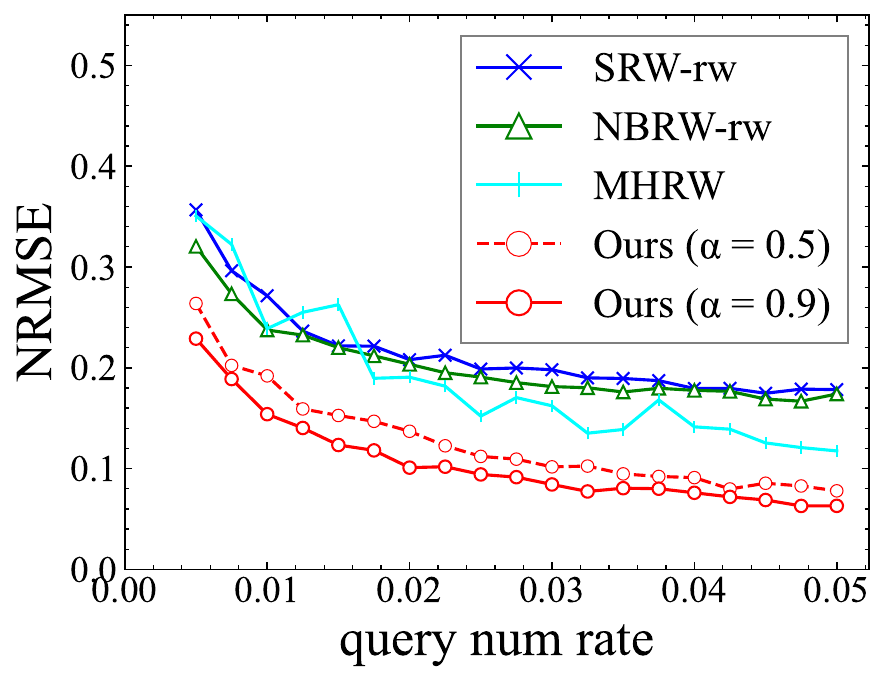}}
      \subfloat[Amazon]{\includegraphics[scale=0.29]{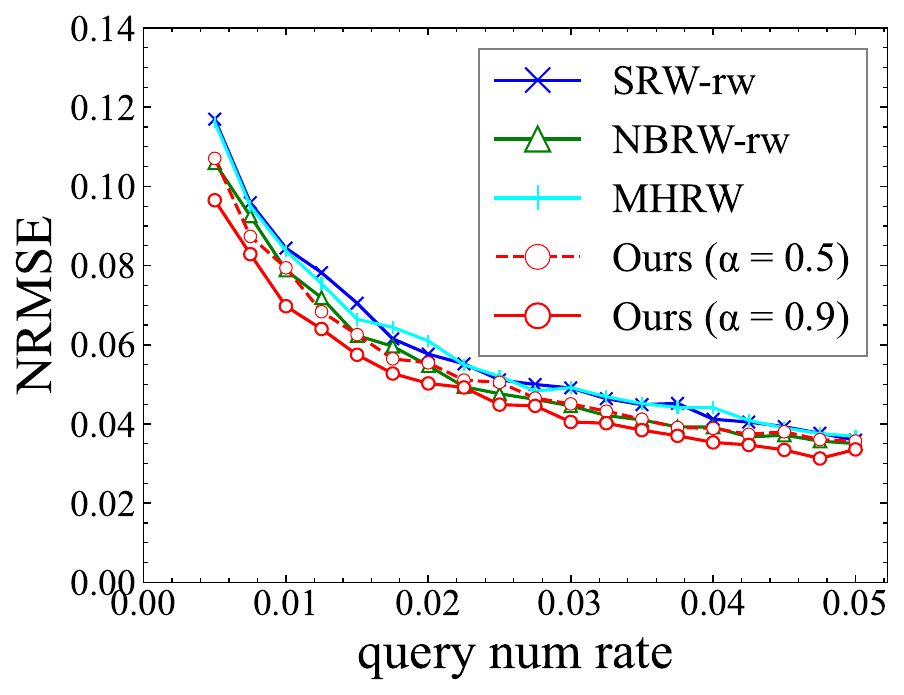}}
      \subfloat[DBA model]{ \includegraphics[scale=0.29]{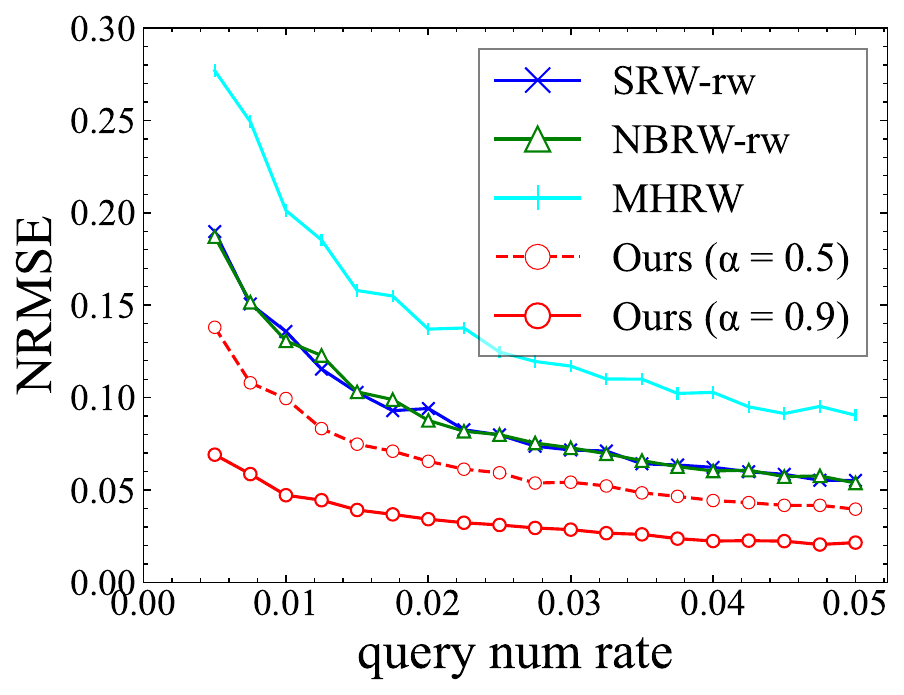}}
    \caption{NRMSE for high degree label estimation.}
    \label{fig:highlabel} 
  \end{figure*}

\begin{figure*}[p]
      \subfloat[ego-Twitter]{\includegraphics[scale=0.29]{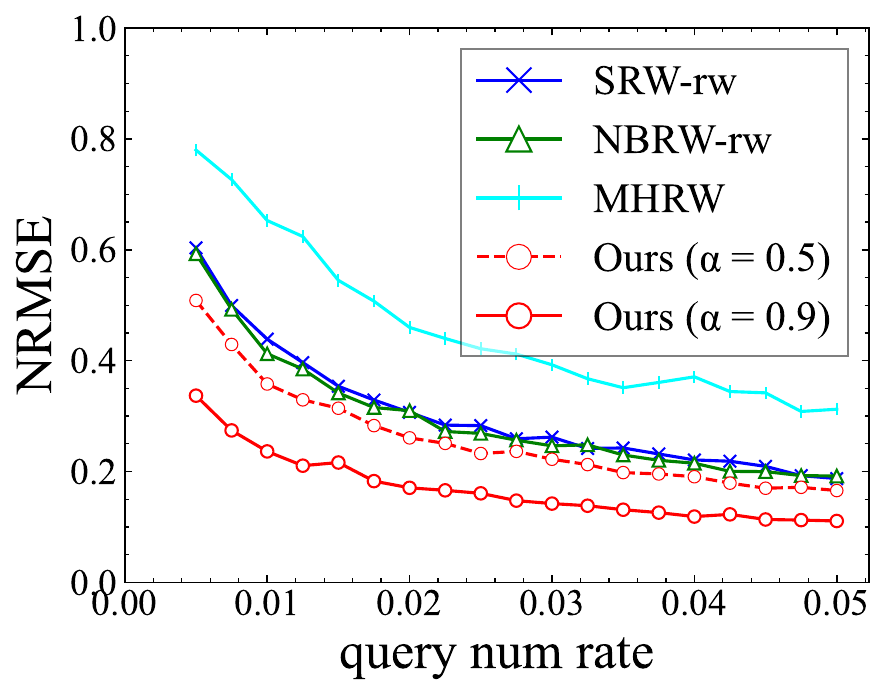}}
      \subfloat[Slashdot]{\includegraphics[scale=0.29]{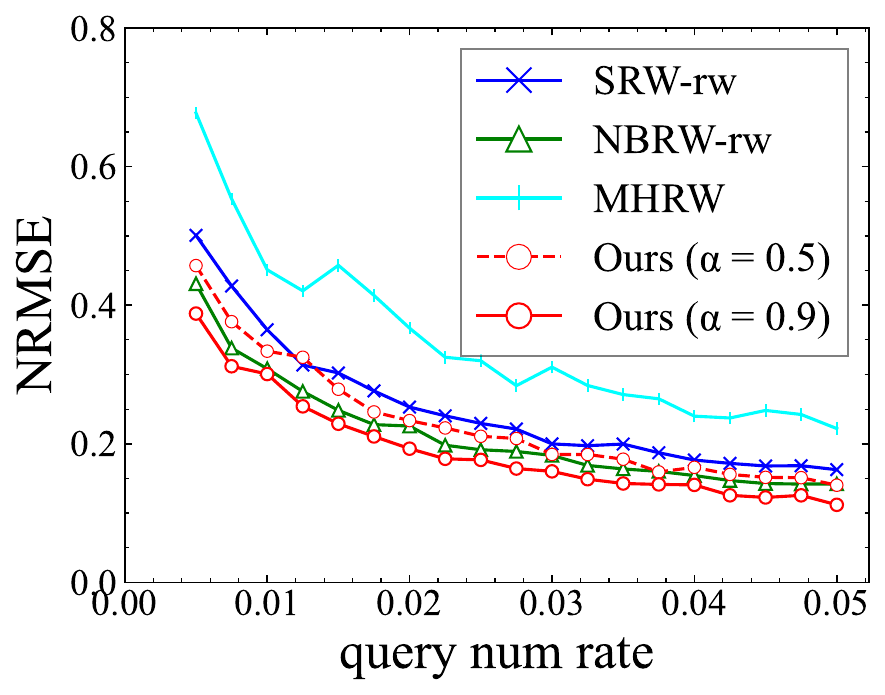}}
      \subfloat[Amazon]{\includegraphics[scale=0.29]{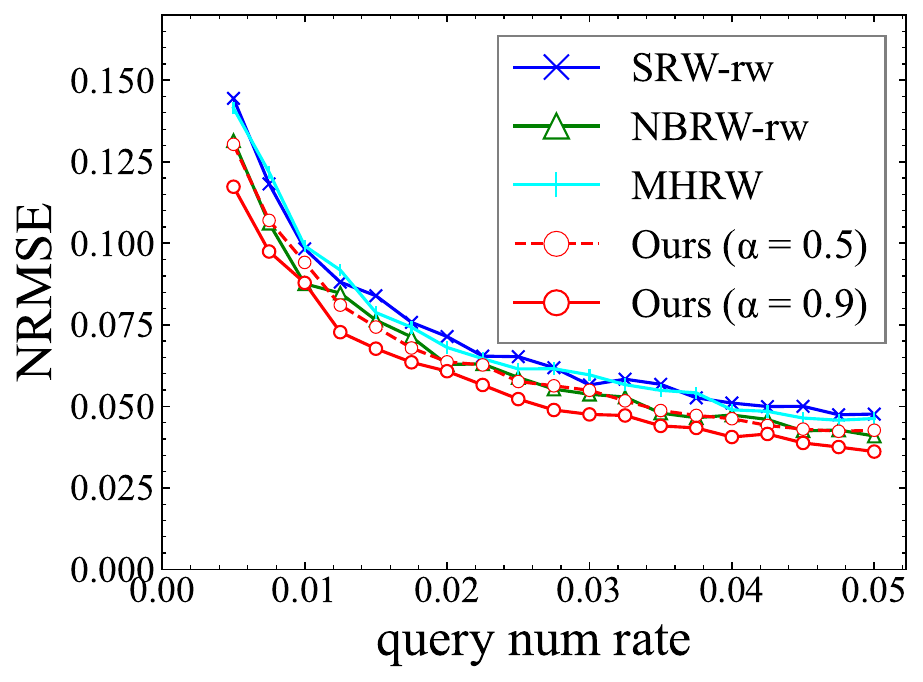}}
      \subfloat[DBA model]{ \includegraphics[scale=0.29]{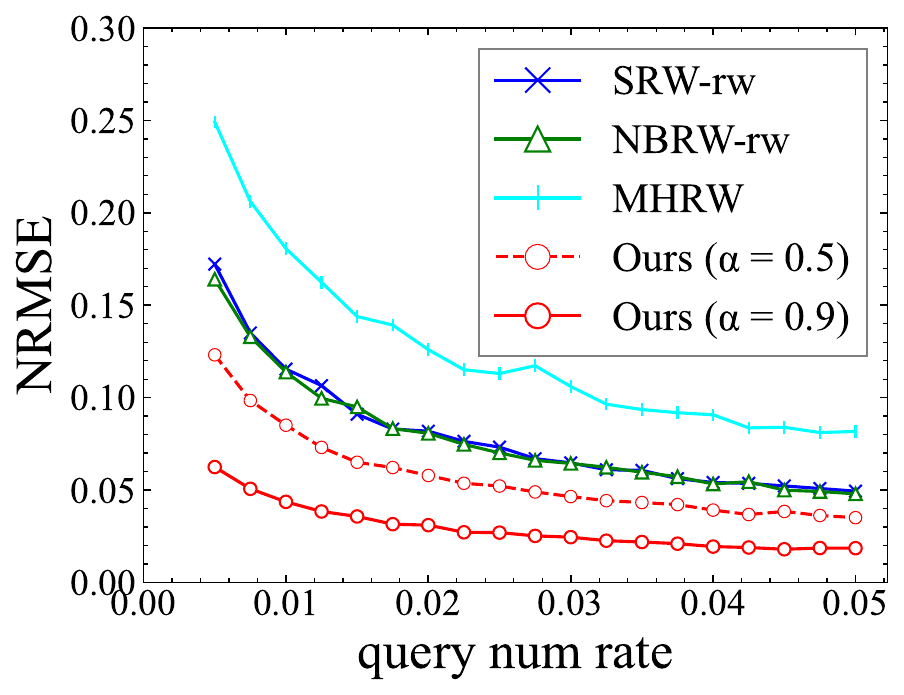}}
    \caption{NRMSE for low degree label estimation.}
    \label{fig:lowlabel} 
  \end{figure*}

\subsection{Comparison with Existing Methods}

We compared the estimation accuracy of each feature between the proposed and existing methods. Existing methods for comparison include well-known random walk-based feature estimation techniques: Simple Random Walk with Reweighting (SRW-rw)~\cite{gjoka2011practical, rasti2009respondent}, Non-backtracking Random Walk with Reweighting (NBRW-rw)~\cite{lee2012beyond}, and Metropolis-Hastings Random Walk (MHRW)~\cite{gjoka2011practical, rasti2009respondent}. These existing methods align with those compared by Iwasaki et al.~\cite{iwasaki2018comparing} under the same cost setting. Iwasaki et al.\ noted a tendency for NBRW-rw to achieve better accuracy when considering the acquisition frequency of adjacent nodes as the cost. However, they also highlighted the possibility of accuracy reversal based on the specific graph and features. 

SRW-rw weighs the sample sequence obtained from a regular random walk to estimate features. NBRW-rw estimates features by applying weighting to a sample sequence obtained from a random walk that avoids transitioning to the previously sampled node. MHRW adjusts transition probabilities based on node degrees, making the steady-state distribution uniform and ensuring that the expected values of the feature sequence from the sampled data serve as unbiased features. In this study, we compared the accuracy of feature estimation while maintaining a consistent query count, representing the number of times information from adjacent nodes is acquired.

Figures \ref{fig:degree}-\ref{fig:lowlabel} illustrate a comparison of the NRMSE values for each feature using the proposed method with $\alpha=0.5,0.9$ and existing methods. The horizontal axis represents the query count as a ratio to the total number of nodes. We vary the query count ratio from $0.25\%$ to $5\%$, adjusting in increments of 0.25\%. MHRW has been omitted owing to significant deviations, particularly in the average out-degree estimation of the DBA model. Across all graphs and features, the proposed method with $\alpha=0.9$ consistently matches or outperforms existing methods. As shown in Figure \ref{fig:alpha}, the proposed method achieves higher accuracy as $\alpha$ approaches $1$, but even at $\alpha=0.5$, it surpasses the existing methods for many graphs and features. Additionally, the results indicate that higher query counts $b$ lead to improved estimation accuracy.

\subsection{Discussion}
\begin{figure*}[t]
      \subfloat[ego-Twitter]{\includegraphics[scale=0.275]{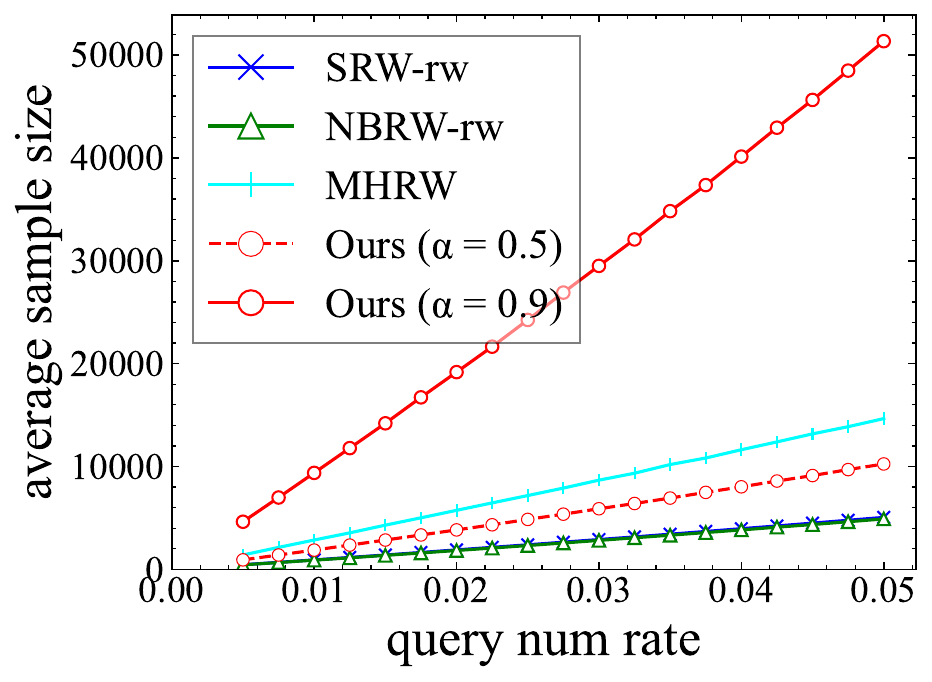}}
      \subfloat[Slashdot]{\includegraphics[scale=0.275]{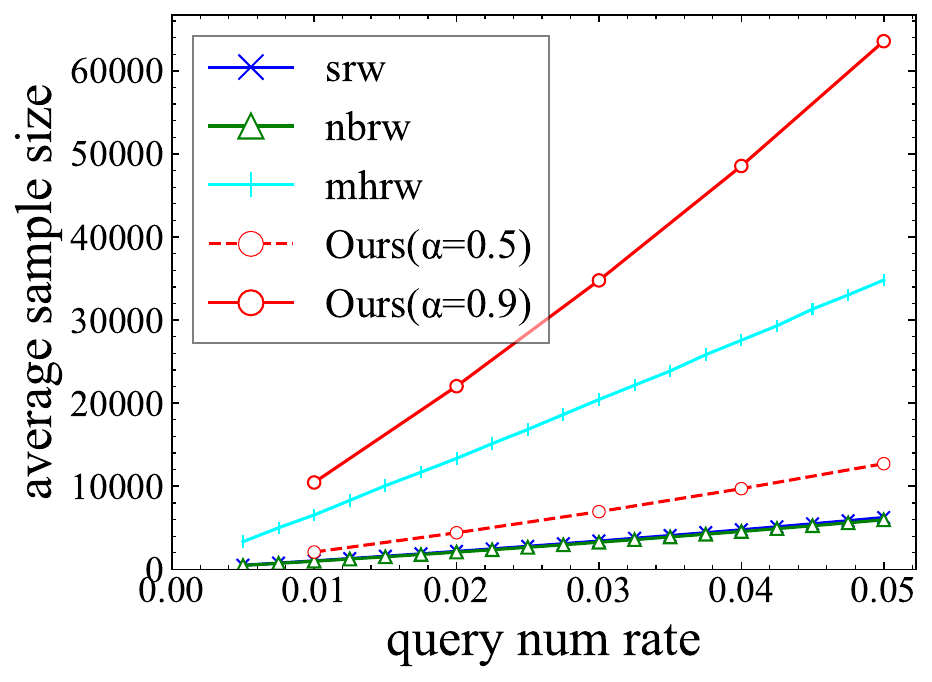}}
      \subfloat[Amazon]{\includegraphics[scale=0.275]{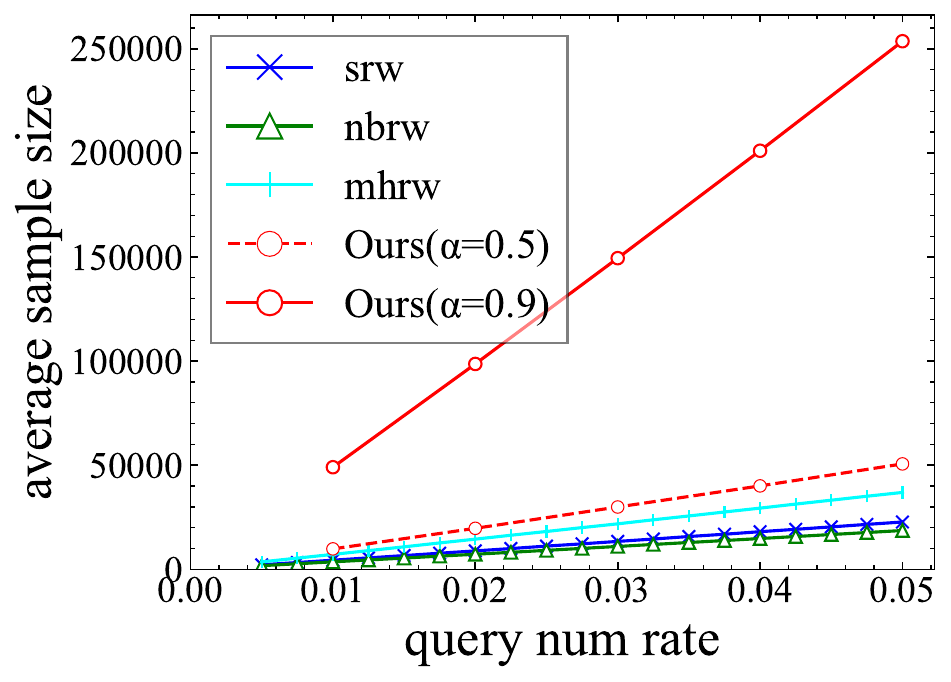}}
      \subfloat[DBA Model]{ \includegraphics[scale=0.275]{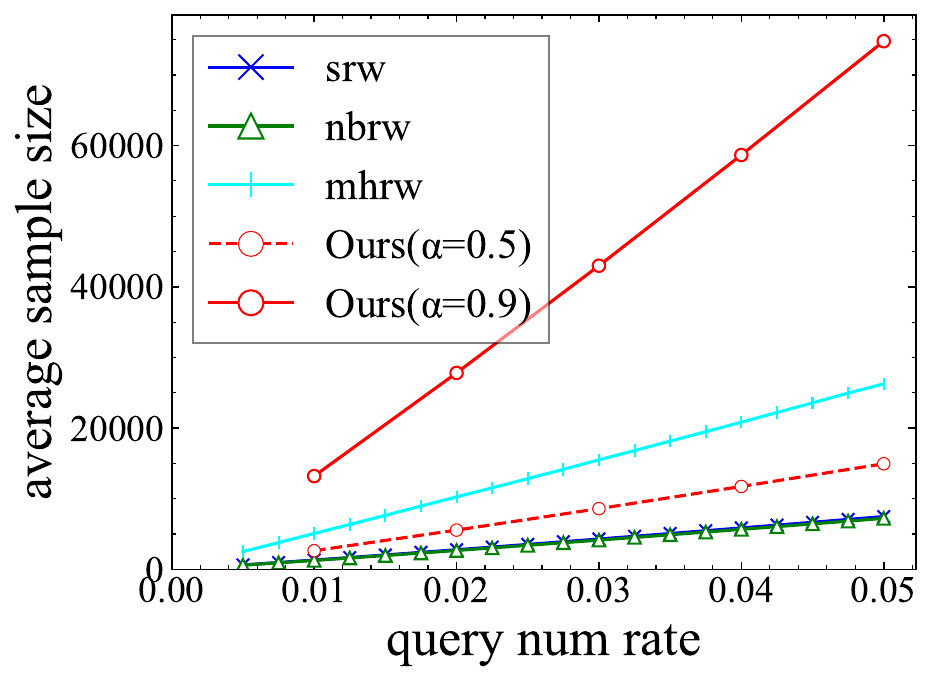}}
    \caption{Average sample sequence size for each query size.}
    \label{fig:samplesize} 
  \end{figure*}

In our proposed method, using more samples for estimation is expected to enhance the accuracy of the estimation. Considering the acquisition of neighboring nodes as a cost, the number of transition samplings within the same cost remains constant, regardless of $\alpha$. This quantity is equivalent to the number of transitions to new nodes in SRW. Our method allows an increase in the number of samples for adjacent node sampling available for estimation as $\alpha$ approaches 1. Therefore, as shown in Figure \ref{fig:alpha}, it is believed that estimation accuracy improves as $\alpha$ approaches 1. Figure \ref{fig:samplesize} shows the average sample size for the proposed method with parameters $\alpha=0.5, 0.9$, and existing methods concerning the number of queries. In the proposed method, setting $\alpha=0.9$ results in the highest number of samples available for estimation at the same cost compared to other methods. This is considered the reason for the superior accuracy of our proposed method, as illustrated in Figure \ref{fig:degree}-\ref{fig:lowlabel}.

The increase in sample size owing to adjacent node sampling is expected not to become a practical bottleneck. The estimator of our proposed method, $\sum_{s=1}^tw(Z_s)g(Z_s)/\sum_{s=1}^tw(Z_s)$ (Section \ref{subsec-est}), can be calculated sequentially during sampling, eliminating the need to store the entire sample sequence. The information that needs to be saved is, as mentioned in Section \ref{subsec-model}, the information of nodes once transited and the information of acquired adjacent nodes. Therefore, this stored information does not change based on the size of sample added by adjacent node sampling.

In practice, $\alpha$ should be determined by considering both the computational complexity and the API rate limit. The value of $\alpha$ affects the number of adjacent node samplings, which on average is $1/(1-\alpha)$ per API call (lines 8 of Algorithm \ref{alg:ours}). The sampling operation itself is efficient, taking $O(1)$ time (line 9-11 of Algorithm \ref{alg:ours}). Therefore, the average computational complexity of a adjacent node sampling for each API call is $O(1/(1-\alpha))$.

As $\alpha$ gets closer to 1, the complexity of sampling increases dramatically. Thus, $\alpha$ should be set based on the API's rate limit. For instance, if Mastodon's API allows one call per second~\cite{mastodonapi}, $\alpha$ should be chosen to be as large as possible while ensuring that the sampling process (line 6-18 of Algorithm \ref{alg:ours}) finishes within one second, depending on the system's computational power.

\section{Related work}
We discuss relevant research on graph sampling. Gjoka et al.~\cite{gjoka2011practical} compared RW-rw and MHRW and illustrated that SRW-rw achieved superior accuracy by reweighting from the steady-state distribution of random walks to obtain unbiased estimates. Lee et al.~\cite{lee2012beyond} introduced NBRW, a non-backtracking random walk that avoids revisiting the previous node. Through theoretical and experimental investigations, they established that NBRW-rw provided unbiased estimates and outperformed SRW-rw. Ribeiro et al.~\cite{ribeiro2010estimating} introduced the Multidimensional Random Walk, enhancing estimation accuracy in the presence of multiple connected components. While our study specifically focuses on graphs with a single weakly connected component, it is worth considering that applying the principles of the Multidimensional Random Walk may enhance accuracy in real-world applications. Iwasaki et al.~\cite{iwasaki2018comparing} proposed a method for comparing sampling algorithms based on query count and demonstrated the potential for accuracy evaluations of SRW-rw and NBRW-rw to reverse depending on the feature.

Next, we explore related research focusing on sampling techniques involving information from neighboring nodes. Han et al.~\cite{han2016waddling} introduced a method in which, during random walks, adjacent nodes are sampled based on the motif under estimation. The decision to acquire information from adjacent nodes depends on the motif, resulting in variations in the probability and depth of obtaining such information, distinguishing it from our study. Additionally, they consider the sampled node count as a cost, which differs from our study using API query count as the cost and estimating various features from adjacent nodes. Illenberger et al.~\cite{illenberger2012estimating} proposed a technique for estimating features by adjusting the sample sequence of snowball sampling and collecting information from adjacent nodes of accessed nodes. However, their approach assumes knowledge of the overall node count, distinguishing it from our study, which employs random walks to estimate features in unknown OSNs.

\section{conclusion}
In this study, we introduced a random walk that stochastically utilized information from adjacent nodes, considering the number of queries needed to obtain this information on OSNs as the cost. Through experiments, we demonstrated that our proposed method vielded more accurate estimations for average degree, the proportion of randomly assigned binary labels, the proportion of labels biased toward high-degree nodes, and the proportion of labels biased toward low-degree nodes compared to existing methods. Futhermore, experiments conducted across various datasets have revealed that accuracy improves as the parameter $\alpha$, representing the probability of adjacent node sampling, approaches 1.

\begin{acks}
This work was supported by JSPS KAKENHI Grant Number\linebreak JP21H04872.
\end{acks}

\bibliographystyle{ACM-Reference-Format}
\bibliography{hasegawa_graph}


\begin{thebibliography}{29}


\ifx \showCODEN    \undefined \def \showCODEN     #1{\unskip}     \fi
\ifx \showDOI      \undefined \def \showDOI       #1{#1}\fi
\ifx \showISBNx    \undefined \def \showISBNx     #1{\unskip}     \fi
\ifx \showISBNxiii \undefined \def \showISBNxiii  #1{\unskip}     \fi
\ifx \showISSN     \undefined \def \showISSN      #1{\unskip}     \fi
\ifx \showLCCN     \undefined \def \showLCCN      #1{\unskip}     \fi
\ifx \shownote     \undefined \def \shownote      #1{#1}          \fi
\ifx \showarticletitle \undefined \def \showarticletitle #1{#1}   \fi
\ifx \showURL      \undefined \def \showURL       {\relax}        \fi
\providecommand\bibfield[2]{#2}
\providecommand\bibinfo[2]{#2}
\providecommand\natexlab[1]{#1}
\providecommand\showeprint[2][]{arXiv:#2}

\bibitem[fac({[n.\,d.]})]%
        {facebookapi}
 \bibinfo{year}{[n.\,d.]}\natexlab{}.
\newblock \bibinfo{title}{Facebook {API} documents}.
\newblock \bibinfo{howpublished}{\url{https://developers.facebook.com/docs/graph-api}}.
\newblock
\newblock
\shownote{(Accessed on 19/11/2023)}.


\bibitem[mas({[n.\,d.]})]%
        {mastodonapi}
 \bibinfo{year}{[n.\,d.]}\natexlab{}.
\newblock \bibinfo{title}{Mastodon {API} documents}.
\newblock \bibinfo{howpublished}{\url{https://docs.joinmastodon.org/api}}.
\newblock
\newblock
\shownote{(Accessed on 19/11/2023)}.


\bibitem[sna({[n.\,d.]})]%
        {snap}
 \bibinfo{year}{[n.\,d.]}\natexlab{}.
\newblock \bibinfo{title}{Stanford Large Network Dataset Collection}.
\newblock \bibinfo{howpublished}{\url{https://snap.stanford.edu/data/index.html}}.
\newblock
\newblock
\shownote{(Accessed on 19/11/2023)}.


\bibitem[twi({[n.\,d.]})]%
        {twitterapi}
 \bibinfo{year}{[n.\,d.]}\natexlab{}.
\newblock \bibinfo{title}{Twitter {API} documents}.
\newblock \bibinfo{howpublished}{\url{https://developer.twitter.com/en/docs/twitter-api}}.
\newblock
\newblock
\shownote{(Accessed on 19/11/2023)}.


\bibitem[Ahn et~al\mbox{.}(2007)]%
        {ahn2007analysis}
\bibfield{author}{\bibinfo{person}{Yong-Yeol Ahn}, \bibinfo{person}{Seungyeop Han}, \bibinfo{person}{Haewoon Kwak}, \bibinfo{person}{Sue Moon}, {and} \bibinfo{person}{Hawoong Jeong}.} \bibinfo{year}{2007}\natexlab{}.
\newblock \showarticletitle{Analysis of topological characteristics of huge online social networking services}. In \bibinfo{booktitle}{\emph{Proceedings of the 16th International Conference on World Wide Web}}. \bibinfo{pages}{835--844}.
\newblock


\bibitem[Aldous and Fill(2002)]%
        {aldous2002reversible}
\bibfield{author}{\bibinfo{person}{David Aldous} {and} \bibinfo{person}{James~Allen Fill}.} \bibinfo{year}{2002}\natexlab{}.
\newblock \bibinfo{title}{Reversible Markov Chains and Random Walks on Graphs}.
\newblock
\newblock
\newblock
\shownote{Unfinished monograph, recompiled 2014, available at \url{http://www.stat.berkeley.edu/$\sim$aldous/RWG/book.html}}.


\bibitem[Chen et~al\mbox{.}(2016)]%
        {chen2016general}
\bibfield{author}{\bibinfo{person}{Xiaowei Chen}, \bibinfo{person}{Yongkun Li}, \bibinfo{person}{Pinghui Wang}, {and} \bibinfo{person}{John C.~S. Lui}.} \bibinfo{year}{2016}\natexlab{}.
\newblock \showarticletitle{A general framework for estimating graphlet statistics via random walk}.
\newblock \bibinfo{journal}{\emph{Proceedings of the VLDB Endowment}} \bibinfo{volume}{10}, \bibinfo{number}{3} (\bibinfo{year}{2016}), \bibinfo{pages}{253--264}.
\newblock


\bibitem[Chiericetti et~al\mbox{.}(2016)]%
        {chiericetti2016sampling}
\bibfield{author}{\bibinfo{person}{Flavio Chiericetti}, \bibinfo{person}{Anirban Dasgupta}, \bibinfo{person}{Ravi Kumar}, \bibinfo{person}{Silvio Lattanzi}, {and} \bibinfo{person}{Tam{\'a}s Sarl{\'o}s}.} \bibinfo{year}{2016}\natexlab{}.
\newblock \showarticletitle{On sampling nodes in a network}. In \bibinfo{booktitle}{\emph{Proceedings of the 25th International Conference on World Wide Web}}. \bibinfo{pages}{471--481}.
\newblock


\bibitem[Fukuda et~al\mbox{.}(2022)]%
        {fukuda2022estimating}
\bibfield{author}{\bibinfo{person}{Mei Fukuda}, \bibinfo{person}{Kazuki Nakajima}, {and} \bibinfo{person}{Kazuyuki Shudo}.} \bibinfo{year}{2022}\natexlab{}.
\newblock \showarticletitle{Estimating the bot population on Twitter via random walk based sampling}.
\newblock \bibinfo{journal}{\emph{IEEE Access}}  \bibinfo{volume}{10} (\bibinfo{year}{2022}), \bibinfo{pages}{17201--17211}.
\newblock


\bibitem[Gjoka et~al\mbox{.}(2011)]%
        {gjoka2011practical}
\bibfield{author}{\bibinfo{person}{Minas Gjoka}, \bibinfo{person}{Maciej Kurant}, \bibinfo{person}{Carter~T Butts}, {and} \bibinfo{person}{Athina Markopoulou}.} \bibinfo{year}{2011}\natexlab{}.
\newblock \showarticletitle{Practical recommendations on crawling online social networks}.
\newblock \bibinfo{journal}{\emph{IEEE Journal on Selected Areas in Communications}} \bibinfo{volume}{29}, \bibinfo{number}{9} (\bibinfo{year}{2011}), \bibinfo{pages}{1872--1892}.
\newblock


\bibitem[Han and Sethu(2016)]%
        {han2016waddling}
\bibfield{author}{\bibinfo{person}{Guyue Han} {and} \bibinfo{person}{Harish Sethu}.} \bibinfo{year}{2016}\natexlab{}.
\newblock \showarticletitle{Waddling random walk: Fast and accurate mining of motif statistics in large graphs}. In \bibinfo{booktitle}{\emph{Proceedings of the 2016 IEEE 16th International Conference on Data Mining}}. \bibinfo{pages}{181--190}.
\newblock


\bibitem[Hardiman and Katzir(2013)]%
        {hardiman2013estimating}
\bibfield{author}{\bibinfo{person}{Stephen~J Hardiman} {and} \bibinfo{person}{Liran Katzir}.} \bibinfo{year}{2013}\natexlab{}.
\newblock \showarticletitle{Estimating clustering coefficients and size of social networks via random walk}. In \bibinfo{booktitle}{\emph{Proceedings of the 22nd International Conference on World Wide Web}}. \bibinfo{pages}{539--550}.
\newblock


\bibitem[Illenberger and Fl{\"o}tter{\"o}d(2012)]%
        {illenberger2012estimating}
\bibfield{author}{\bibinfo{person}{Johannes Illenberger} {and} \bibinfo{person}{Gunnar Fl{\"o}tter{\"o}d}.} \bibinfo{year}{2012}\natexlab{}.
\newblock \showarticletitle{Estimating network properties from snowball sampled data}.
\newblock \bibinfo{journal}{\emph{Social Networks}} \bibinfo{volume}{34}, \bibinfo{number}{4} (\bibinfo{year}{2012}), \bibinfo{pages}{701--711}.
\newblock


\bibitem[Iwasaki and Shudo(2018a)]%
        {iwasaki2018comparing}
\bibfield{author}{\bibinfo{person}{Kenta Iwasaki} {and} \bibinfo{person}{Kazuyuki Shudo}.} \bibinfo{year}{2018}\natexlab{a}.
\newblock \showarticletitle{Comparing Graph Sampling Methods Based on the Number of Queries}. In \bibinfo{booktitle}{\emph{Proceedings of the 2018 IEEE International Conference on Parallel \& Distributed Processing with Applications, Ubiquitous Computing \& Communications, Big Data \& Cloud Computing, Social Computing \& Networking, Sustainable Computing \& Communications}}. \bibinfo{pages}{1136--1143}.
\newblock


\bibitem[Iwasaki and Shudo(2018b)]%
        {iwasaki2018estimating}
\bibfield{author}{\bibinfo{person}{Kenta Iwasaki} {and} \bibinfo{person}{Kazuyuki Shudo}.} \bibinfo{year}{2018}\natexlab{b}.
\newblock \showarticletitle{Estimating the Clustering Coefficient of a Social Network by a Non-backtracking Random Walk}. In \bibinfo{booktitle}{\emph{Proceedings of the 2018 {IEEE} {International} {Conference} on {Big} {Data} and {Smart} {Computing}}}. \bibinfo{pages}{114--118}.
\newblock


\bibitem[Jones(2004)]%
        {jones2004markov}
\bibfield{author}{\bibinfo{person}{Galin~L. Jones}.} \bibinfo{year}{2004}\natexlab{}.
\newblock \showarticletitle{On the {Markov} chain central limit theorem}.
\newblock \bibinfo{journal}{\emph{Probability Surveys}} \bibinfo{volume}{1}, \bibinfo{number}{none} (\bibinfo{year}{2004}), \bibinfo{pages}{299--320}.
\newblock


\bibitem[Kurant et~al\mbox{.}(2011)]%
        {kurant2011towards}
\bibfield{author}{\bibinfo{person}{Maciej Kurant}, \bibinfo{person}{Athina Markopoulou}, {and} \bibinfo{person}{Patrick Thiran}.} \bibinfo{year}{2011}\natexlab{}.
\newblock \showarticletitle{Towards unbiased BFS sampling}.
\newblock \bibinfo{journal}{\emph{IEEE Journal on Selected Areas in Communications}} \bibinfo{volume}{29}, \bibinfo{number}{9} (\bibinfo{year}{2011}), \bibinfo{pages}{1799--1809}.
\newblock


\bibitem[Kwak et~al\mbox{.}(2010)]%
        {kwak2010twitter}
\bibfield{author}{\bibinfo{person}{Haewoon Kwak}, \bibinfo{person}{Changhyun Lee}, \bibinfo{person}{Hosung Park}, {and} \bibinfo{person}{Sue Moon}.} \bibinfo{year}{2010}\natexlab{}.
\newblock \showarticletitle{What is {Twitter}, a social network or a news media?}. In \bibinfo{booktitle}{\emph{Proceedings of the 19th International Conference on World Wide Web}}. \bibinfo{pages}{591--600}.
\newblock


\bibitem[Lee et~al\mbox{.}(2012)]%
        {lee2012beyond}
\bibfield{author}{\bibinfo{person}{Chul-Ho Lee}, \bibinfo{person}{Xin Xu}, {and} \bibinfo{person}{Do~Young Eun}.} \bibinfo{year}{2012}\natexlab{}.
\newblock \showarticletitle{Beyond random walk and metropolis-hastings samplers: {Why} you should not backtrack for unbiased graph sampling}. In \bibinfo{booktitle}{\emph{Proceedings of the 12th {ACM} {SIGMETRICS}/{PERFORMANCE} {Joint} {International} {Conference} on {Measurement} and {Modeling} of {Computer} {Systems}}}. \bibinfo{pages}{319--330}.
\newblock


\bibitem[Levin and Peres(2017)]%
        {levin2017markov}
\bibfield{author}{\bibinfo{person}{David~A Levin} {and} \bibinfo{person}{Yuval Peres}.} \bibinfo{year}{2017}\natexlab{}.
\newblock \bibinfo{booktitle}{\emph{Markov chains and mixing times}}. Vol.~\bibinfo{volume}{107}.
\newblock \bibinfo{publisher}{American Mathematical Society}.
\newblock


\bibitem[Li et~al\mbox{.}(2015)]%
        {li2015on}
\bibfield{author}{\bibinfo{person}{Rong-Hua Li}, \bibinfo{person}{Jeffrey~Xu Yu}, \bibinfo{person}{Lu Qin}, \bibinfo{person}{Rui Mao}, {and} \bibinfo{person}{Tan Jin}.} \bibinfo{year}{2015}\natexlab{}.
\newblock \showarticletitle{On random walk based graph sampling}. In \bibinfo{booktitle}{\emph{Proceedings of the 2015 {IEEE} 31st {International} {Conference} on {Data} {Engineering}}}. \bibinfo{pages}{927--938}.
\newblock


\bibitem[Matsumura et~al\mbox{.}(2018)]%
        {matsumura2018average}
\bibfield{author}{\bibinfo{person}{Toshiki Matsumura}, \bibinfo{person}{Kenta Iwasaki}, {and} \bibinfo{person}{Kazuyuki Shudo}.} \bibinfo{year}{2018}\natexlab{}.
\newblock \showarticletitle{Average {Path} {Length} {Estimation} of {Social} {Networks} by {Random} {Walk}}. In \bibinfo{booktitle}{\emph{Proceedings of the 2018 {IEEE} {International} {Conference} on {Big} {Data} and {Smart} {Computing}}}. \bibinfo{pages}{611--614}.
\newblock


\bibitem[Matsumura and Shudo(2019)]%
        {matsumura2019metropolis}
\bibfield{author}{\bibinfo{person}{Toshiki Matsumura} {and} \bibinfo{person}{Kazuyuki Shudo}.} \bibinfo{year}{2019}\natexlab{}.
\newblock \showarticletitle{Metropolis-Hastings Random Walk with a Reduced Number of Self-Loops}. In \bibinfo{booktitle}{\emph{Proceedings of the 2019 {IEEE} {International} {Conference} on {Parallel} \& {Distributed} {Processing} with {Applications}, {Big} {Data} \& {Cloud} {Computing}, {Sustainable} {Computing} \& {Communications}, {Social} {Computing} \& {Networking}}}. \bibinfo{pages}{1556--1563}.
\newblock


\bibitem[Mislove et~al\mbox{.}(2007)]%
        {mislove2007measurement}
\bibfield{author}{\bibinfo{person}{Alan Mislove}, \bibinfo{person}{Massimiliano Marcon}, \bibinfo{person}{Krishna~P Gummadi}, \bibinfo{person}{Peter Druschel}, {and} \bibinfo{person}{Bobby Bhattacharjee}.} \bibinfo{year}{2007}\natexlab{}.
\newblock \showarticletitle{Measurement and analysis of online social networks}. In \bibinfo{booktitle}{\emph{Proceedings of the 7th ACM SIGCOMM Conference on Internet Measurement}}. \bibinfo{pages}{29--42}.
\newblock


\bibitem[Nakajima et~al\mbox{.}(2018)]%
        {nakajima2018estimating}
\bibfield{author}{\bibinfo{person}{Kazuki Nakajima}, \bibinfo{person}{Kenta Iwasaki}, \bibinfo{person}{Toshiki Matsumura}, {and} \bibinfo{person}{Kazuyuki Shudo}.} \bibinfo{year}{2018}\natexlab{}.
\newblock \showarticletitle{Estimating Top-k Betweenness Centrality Nodes in Online Social Networks}. In \bibinfo{booktitle}{\emph{Proceedings of the 2018 {IEEE} {International} {Conference} on {Parallel} \& {Distributed} {Processing} with {Applications}, {Ubiquitous} {Computing} \& {Communications}, {Big} {Data} \& {Cloud} {Computing}, {Social} {Computing} \& {Networking}, {Sustainable} {Computing} \& {Communications}}}. \bibinfo{pages}{1128--1135}.
\newblock


\bibitem[P{\'o}sfai and Barabasi(2016)]%
        {posfai2016network}
\bibfield{author}{\bibinfo{person}{M{\'a}rton P{\'o}sfai} {and} \bibinfo{person}{Albert-Laszlo Barabasi}.} \bibinfo{year}{2016}\natexlab{}.
\newblock \bibinfo{booktitle}{\emph{Network Science}}.
\newblock \bibinfo{publisher}{Cambridge University Press}.
\newblock


\bibitem[Rasti et~al\mbox{.}(2009)]%
        {rasti2009respondent}
\bibfield{author}{\bibinfo{person}{Amir~Hassan Rasti}, \bibinfo{person}{Mojtaba Torkjazi}, \bibinfo{person}{Reza Rejaie}, \bibinfo{person}{Nick Duffield}, \bibinfo{person}{Walter Willinger}, {and} \bibinfo{person}{Daniel Stutzbach}.} \bibinfo{year}{2009}\natexlab{}.
\newblock \showarticletitle{Respondent-driven sampling for characterizing unstructured overlays}. In \bibinfo{booktitle}{\emph{Proceedings of the 28th International Conference on Computer Communications}}. \bibinfo{pages}{2701--2705}.
\newblock


\bibitem[Ribeiro and Towsley(2010)]%
        {ribeiro2010estimating}
\bibfield{author}{\bibinfo{person}{Bruno Ribeiro} {and} \bibinfo{person}{Don Towsley}.} \bibinfo{year}{2010}\natexlab{}.
\newblock \showarticletitle{Estimating and sampling graphs with multidimensional random walks}. In \bibinfo{booktitle}{\emph{Proceedings of the 10th ACM SIGCOMM conference on Internet measurement}}. \bibinfo{pages}{390--403}.
\newblock


\bibitem[Ribeiro and Towsley(2012)]%
        {ribeiro2012on}
\bibfield{author}{\bibinfo{person}{Bruno Ribeiro} {and} \bibinfo{person}{Don Towsley}.} \bibinfo{year}{2012}\natexlab{}.
\newblock \showarticletitle{On the estimation accuracy of degree distributions from graph sampling}. In \bibinfo{booktitle}{\emph{Proceedings of the 51st {IEEE} {Conference} on {Decision} and {Control}}}. \bibinfo{pages}{5240--5247}.
\newblock


\end{thebibliography}

\appendix
\section{Proof of Theorem \ref{the:sta}}\label{ape-sta}
First, we present the following lemma.
\begin{lemma}
    \label{lemma:mdsum}
    $\sum_{v_j\in N(v_i)}\frac{m(e_{ij})}{d_\mathrm{sum}(v_i)} = 1 $
\end{lemma}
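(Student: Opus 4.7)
The plan is to unpack the definition of $m(e_{ij})$ and observe that the sum collapses to a simple count. First I would pull the constant $d_{\mathrm{sum}}(v_i)$ out of the denominator, reducing the claim to showing $\sum_{v_j \in N(v_i)} m(e_{ij}) = d_{\mathrm{sum}}(v_i)$.

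Next I would substitute the definition $m(e_{ij}) = \1_{v_j \in N_{\mathrm{out}}(v_i)} + \1_{v_j \in N_{\mathrm{in}}(v_i)}$ and split the sum into two pieces. For the first piece, summing $\1_{v_j \in N_{\mathrm{out}}(v_i)}$ over $v_j \in N(v_i) = N_{\mathrm{out}}(v_i) \cup N_{\mathrm{in}}(v_i)$ simply counts the elements of $N_{\mathrm{out}}(v_i)$ (every out-neighbor lies in $N(v_i)$), giving $d_{\mathrm{out}}(v_i)$. Symmetrically the second piece contributes $d_{\mathrm{in}}(v_i)$. Adding them yields $d_{\mathrm{out}}(v_i) + d_{\mathrm{in}}(v_i) = d_{\mathrm{sum}}(v_i)$ by the definition of the total degree, and dividing gives $1$.

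There is no real obstacle here; the only subtlety worth flagging explicitly is that the sum ranges over the set $N(v_i)$ (each neighbor counted once as an element of the union), while the multiplicity information about mutual edges is carried entirely by $m(e_{ij}) = 2$ for $v_j \in N_{\mathrm{out}}(v_i) \cap N_{\mathrm{in}}(v_i)$. Splitting $m$ as a sum of two indicators sidesteps any need to case-split on whether $v_j$ is a friend, a follower, or both, which is the cleanest way to avoid bookkeeping errors.
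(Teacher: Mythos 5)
Your proof is correct and takes essentially the same approach as the paper: both reduce the claim to showing that the total multiplicity $\sum_{v_j \in N(v_i)} m(e_{ij})$ equals $d_{\mathrm{sum}}(v_i)$. The paper does this by partitioning $N(v_i)$ into the three disjoint sets $N_{\mathrm{out}}(v_i)\setminus N_{\mathrm{in}}(v_i)$, $N_{\mathrm{in}}(v_i)\setminus N_{\mathrm{out}}(v_i)$, and $N_{\mathrm{out}}(v_i)\cap N_{\mathrm{in}}(v_i)$ (with $m$ equal to $1$, $1$, $2$ respectively), whereas you sum the two indicators separately; this is only a minor reorganization of the same count.
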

\begin{proof}
\begin{align*}
    &\sum_{j\in N(v_i)}\frac{m(e_{ij})}{d_\mathbf{sum}(v_i)} \\
    &= \sum_{v_j\in N_\mathrm{out}(v_i)\backslash N_\mathrm{in}(v_i)}\frac{m(e_{ij})}{d_\mathrm{sum}(v_i)} + \sum_{v_j\in N_\mathrm{in}(v_i)\backslash N_\mathrm{out}(v_i)}\frac{m(e_{ij})}{d_\mathrm{sum}(v_i)} \\
    &\quad + \sum_{v_j\in N_\mathrm{out}(v_i) \cap N_\mathrm{in}(v_i)}\frac{m(e_{ij})}{d_\mathrm{sum}(v_i)} \\
    &=\frac{1}{{d_\mathrm{sum}(v_i)}}\cdot (1\cdot (d_\mathrm{out}(v_i) - d_\mathrm{in\mathchar`-out} (v_i)) \\
    &\quad+ 1\cdot (d_\mathrm{in}(v_i) - d_\mathrm{in\mathchar`-out}(v_i)) + 2\cdot (d_\mathrm{in\mathchar`-out}(v_i)))\\
    &= 1
\end{align*}
\end{proof}

In the following, we introduce Theorem \ref{the:sta}. To establish the necessity and sufficiency, it is enough to prove that $\boldsymbol{\pi} \mathbf{P} = \boldsymbol{\pi}$ based on Theorem \ref{the:mal}. Here, we define $\boldsymbol{\tilde{\pi}} \defeq \boldsymbol{\pi} \mathbf{P}$.

\noindent (i) The case $l \neq k$
\begin{align*}
    &\tilde{\pi}(e_{lk}) \\
    &= \sum_{v_j\in N(v_l)}\pi(e_{lj})\cdot p_{neighbor}(v_k;v_l)  \\
    &\quad+ \pi (e_{ll})\cdot p_{neighbor}(v_k;v_l) \\
    &= \sum_{v_j\in N(v_l)}\alpha \cdot \frac{m(e_{lj})}{2|E|}\cdot \alpha \cdot \frac{m(e_{lk})}{d_\mathrm{sum}(v_l)} \\
    & \quad+ (1-\alpha) \cdot \frac{d_\mathrm{sum}(v_l)}{2|E|} \cdot \alpha \cdot \frac{m(e_{lk})}{d_\mathrm{sum}(v_l)} \\
    &= \frac{\alpha ^2 \cdot m(e_{lk})}{2|E|}\cdot \sum_{v_j\in N(v_l)}\frac{m(e_{lj})}{d_\mathrm{sum}(v_l)} + \frac{\alpha(1-\alpha)\cdot m(e_{lk})}{2|E|} \\
    &= \frac{m(e_{lk})}{2|E|} (\alpha ^2 + \alpha(1-\alpha))\\
    &= \alpha \cdot \frac{m(e_{lk})}{2|E|} \\
    &= \pi (e_{lk})
\end{align*}
The first equality is valid owing to the probability transition matrix $\mathbf{P}$ defined in Definition \ref{def:p}. The second equality results from the application of Definition \ref{def:pn}. The third equality is derived through algebraic manipulation, and the fourth equality holds by applying Lemma \ref{lemma:mdsum}. The remaining equalities are also established through algebraic manipulation.

\noindent (i) The case $l = k$
\begin{align*}
    &\tilde{\pi}(e_{lk}) \\ 
    &= \sum_{v_i\in N(v_l)}\pi(e_{i,i})\cdot p_{walk}(v_l;v_i)  \\
    & \quad +\sum_{v_i\in N(v_l)}\sum_{v_j\in N(v_i)}\pi (e_{ij})\cdot p_{walk}(v_l;v_i)\\
    &= \sum_{v_i\in N(v_l)}(1-\alpha)\cdot \frac{d_\mathrm{sum}(v_i)}{2|E|}\cdot (1-\alpha)\cdot \frac{m(e_{il})}{d_\mathrm{sum}(v_i)} \\ 
    & \quad+ \sum_{v_i\in N(v_l)}\sum_{v_j\in N(v_i)}\alpha \cdot \frac{m(e{ij})}{2|E|}\cdot  (1-\alpha)\cdot \frac{m(e_{il})}{d_\mathrm{sum}(v_i)} \\
    &= \sum_{v_i\in N(v_l)} (1-\alpha)^2\cdot \frac{m(e_{il})}{2|E|} \\ 
    & \quad+ \sum_{v_i\in N(v_l)}\alpha(1-\alpha)\cdot \frac{m(e_{il})}{2|E|}\sum_{v_j\in N(v_i)}\frac{m(e_{ij})}{d_\mathrm{sum}(v_i)} \\
    &= \sum_{v_i\in N(v_l)}((1-\alpha)^2+\alpha(1-\alpha))\cdot \frac{m(e_{il})}{2|E|} \\
    &= (1-2\alpha + \alpha^2 + \alpha - \alpha^2)\cdot \frac{d_\mathrm{sum}(v_l)}{2|E|}\sum_{v_i\in N(v_l)}\frac{m(e_{li})}{d_\mathrm{sum}(v_l)} \\
    &= (1-\alpha)\cdot \frac{d_\mathrm{sum}(v_l)}{2|E|} \\
    &= \pi (e_{lk})
\end{align*}
The first equality is valid owing to the probability transition matrix $\mathbf{P}$ defined in Definition \ref{def:p}. The second equality results from the application of Definition \ref{def:pw}. The third equality is derived through algebraic manipulation, and the fourth equality holds by applying Lemma \ref{lemma:mdsum}. The fifth equality holds because $m(e_{ij}) = m(e_{ji})$, and the sixth equality follows from Lemma \ref{lemma:mdsum}.

As a result of conditions (i) and (ii), it is established that for any $e_{lk} \in \Omega$, the equality $\tilde{\pi}(e_{lk}) = \pi(e_{lk})$ holds true.

In conclusion, we have proven Theorem \ref{the:sta}.

\section{Proof of theorem \ref{the:fea}}\label{ape-fea}
To begin, consider the expected value of a function $g$ within the state space $\Omega$ concerning the stationary distribution $\boldsymbol{\pi}$, expressed as follows.

\begin{definition}
\label{def:epi}
    $\mathbb{E}_{\pi}(g) \stackrel{\mathrm{def}}{=} \sum_{e_{ij} \in \Omega}\pi(e_{ij})g(e_{ij})$
\end{definition}

Next, we define an estimator as follows.

\begin{definition}
\label{def:mu}
    $\hat{\mu _t}(g) \stackrel{\mathrm{def}}{=} \frac{1}{t}\sum_{s=1}^t g(Z_s)$
\end{definition}

Here, according to Theorem \ref{the:mar-str}, the following theorem holds.
\begin{theorem}
\label{the:lee}
    When the sequence ${Z_t}$ follows a finite and irreducible Markov chain with a stationary distribution $\boldsymbol{\pi}$, for any initial state, as $t$ approaches infinity,
    \[\hat{\mu _t}(g) \rightarrow \mathbb{E}_\pi (g) \text{ a.s.} \]
\end{theorem}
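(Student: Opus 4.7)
The plan is to recognize Theorem~\ref{the:lee} as essentially a direct specialization of the Markov chain strong law of large numbers (Theorem~\ref{the:mar-str}) already presented in the preliminaries, applied to the specific function $g : \Omega \to \mathbb{R}$. The proof then reduces to checking that the hypotheses and the notational translation line up cleanly.

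First I would identify the correspondence with Theorem~\ref{the:mar-str} by taking $S := \Omega$ and $f := g$. The state space $\Omega$ is finite since $V$ is finite (Definition~\ref{def:omg}), and $g$ is a real-valued function on $\Omega$ by Definition~\ref{def:g}. Existence and uniqueness of the stationary distribution $\boldsymbol{\pi}$ for the Markov chain in question are part of the hypothesis of Theorem~\ref{the:lee}, and in the particular case of the algorithmic chain $\{Z_t\}$ these properties were verified in Theorem~\ref{the:mal}.

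Next I would invoke Theorem~\ref{the:mar-str}, which yields, for any initial state,
\[
\frac{1}{t}\sum_{s=1}^{t} g(Z_s) \;\longrightarrow\; \sum_{e_{ij}\in\Omega} \pi(e_{ij})\, g(e_{ij}) \quad \text{a.s.}
\]
as $t \to \infty$. By Definition~\ref{def:mu} the left-hand side is exactly $\hat{\mu}_t(g)$, and by Definition~\ref{def:epi} the right-hand side is exactly $\mathbb{E}_\pi(g)$, which gives the claim.

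The only point needing a small amount of care, and the one I would flag as the ``hard'' part in a literal reading, is the mild terminological gap between the hypothesis of Theorem~\ref{the:lee} (``finite and irreducible'') and the hypothesis of Theorem~\ref{the:mar-str} (``ergodic''). In a finite state space, irreducibility already implies positive recurrence, and the time-average SLLN requires only these two properties; aperiodicity is needed for convergence of the distribution at time $t$, not for convergence of ergodic averages. So either one appeals to this slightly sharper standard form of the Markov SLLN, or else one notes that in the intended application the chain $\{Z_t\}$ was already shown to be ergodic inside the proof of Theorem~\ref{the:mal} and then invokes Theorem~\ref{the:mar-str} verbatim. Either route is pure bookkeeping and presents no substantive obstacle.
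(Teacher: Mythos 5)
Your proposal matches the paper's own treatment: the paper justifies Theorem~\ref{the:lee} simply by invoking the Markov-chain strong law of large numbers (Theorem~\ref{the:mar-str}) with $S=\Omega$ and $f=g$, exactly as you do. Your additional remark about the irreducible-versus-ergodic terminology is a reasonable refinement but does not change the argument.
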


To prove Theorem \ref{the:fea}, we will introduce the following lemma.
\begin{lemma}
\label{lemma:wgtoeu}
    As $t$ approaches infinity,
    \[\frac{2|E|}{n}\cdot \hat{\mu_t}(wg) \rightarrow \mathbb{E}_u(f)\]
\end{lemma}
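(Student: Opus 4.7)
The plan is to reduce the claim to computing the stationary expectation $\mathbb{E}_\pi(wg)$ by invoking the Markov-chain strong law (Theorem~\ref{the:lee}), and then showing via direct summation that $\mathbb{E}_\pi(wg) = \frac{n}{2|E|}\mathbb{E}_u(f)$, so that multiplying by $\frac{2|E|}{n}$ recovers $\mathbb{E}_u(f)$.

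First I would observe that by Theorem~\ref{the:mal} the chain $\{Z_t\}$ is ergodic on the finite state space $\Omega$ with stationary distribution $\boldsymbol{\pi}$ given by Theorem~\ref{the:sta}, so Theorem~\ref{the:lee} applied to the bounded function $wg:\Omega\to\mathbb{R}$ yields $\hat{\mu}_t(wg) \to \mathbb{E}_\pi(wg)$ almost surely. Hence it suffices to show that $\mathbb{E}_\pi(wg) = \frac{n}{2|E|}\mathbb{E}_u(f)$.

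Next I would expand $\mathbb{E}_\pi(wg) = \sum_{e_{ij}\in \Omega} \pi(e_{ij})\,w(e_{ij})\,g(e_{ij})$ and split the sum according to the two cases in the definition of $\pi$. In the $i=j$ case the states correspond to transition sampling, and the factor $d_\mathrm{sum}(v_i)$ appearing in $\pi(e_{ii})$ cancels against $w(e_{ii})=1/d_\mathrm{sum}(v_i)$, leaving $\sum_{v_i\in V}(1-\alpha)\frac{f(v_i)}{2|E|} = \frac{(1-\alpha)n}{2|E|}\mathbb{E}_u(f)$. In the $i\neq j$ case I would factor $f(v_j)/d_\mathrm{sum}(v_j)$ out of the sum over the staying node $v_i$ and evaluate the inner sum $\sum_{v_i\in N(v_j)} m(e_{ij})$; using the identity $m(e_{ij})=m(e_{ji})$ together with Lemma~\ref{lemma:mdsum} (applied with the roles of $i$ and $j$ swapped), this inner sum equals $d_\mathrm{sum}(v_j)$, which cancels the denominator and produces $\frac{\alpha}{2|E|}\sum_{v_j\in V} f(v_j) = \frac{\alpha n}{2|E|}\mathbb{E}_u(f)$.

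Adding the two pieces gives $\mathbb{E}_\pi(wg) = \frac{(1-\alpha)+\alpha}{2|E|}\cdot n\cdot \mathbb{E}_u(f) = \frac{n}{2|E|}\mathbb{E}_u(f)$, and multiplying by $2|E|/n$ completes the proof. The only real subtlety is the $i\neq j$ step: one has to be careful that the stationary weight $m(e_{ij})$ is naturally indexed by the outgoing neighborhood of $v_i$ (as in Lemma~\ref{lemma:mdsum}), whereas here we are summing over the $v_i$ adjacent to a fixed $v_j$, so the symmetry $m(e_{ij})=m(e_{ji})$ must be invoked explicitly before Lemma~\ref{lemma:mdsum} can be applied. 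Once that reindexing is done, the rest is routine algebra.
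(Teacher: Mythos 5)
Your proposal is correct and follows essentially the same route as the paper's proof: invoke Theorem~\ref{the:lee} to reduce to computing $\mathbb{E}_\pi(wg)$, split the sum over $\Omega$ into the $i=j$ and $i\neq j$ states of the stationary distribution from Theorem~\ref{the:sta}, and use the symmetry $m(e_{ij})=m(e_{ji})$ together with Lemma~\ref{lemma:mdsum} to collapse the inner sum over staying nodes. Your explicit remark about needing the symmetry of $m$ before applying Lemma~\ref{lemma:mdsum} is a point the paper passes over silently, but it does not constitute a different approach.
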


\begin{proof}
    \begin{align*}
        &\frac{2|E|}{n}\cdot \hat{\mu_t}(wg) \\ 
        &\rightarrow \frac{2|E|}{n}\cdot \mathbb{E}_{\pi}(wg)\\
        &=\frac{2|E|}{n}\cdot \sum_{e_{ij}\in \Omega}\pi(e_{ij})\cdot w(e_{ij}) \cdot g(e_{ij})\\
        &= \frac{2|E|}{n}\cdot\sum_{v_j\in V}\sum_{v_i\in N(v_j)}\pi(e_{ij})\cdot w(e_{ij}) \cdot f(v_j) \\ 
        &\quad+ \frac{2|E|}{n}\cdot\sum_{v_j\in V}\pi(e_{jj})\cdot w(e_{jj}) \cdot f(v_j) \\
        &= \sum_{v_j\in V}\sum_{v_i\in N(v_j)}\frac{2|E|}{n}\cdot \alpha \cdot \frac{m(e_{ij})}{2|E|}\cdot \frac{1}{d_\mathrm{sum}(v_j)}\cdot f(v_j) \\ 
        &\quad+ \sum_{v_j\in V}\frac{2|E|}{n}\cdot (1-\alpha) \cdot \frac{d_\mathrm{sum}(v_j)}{2|E|}\cdot \frac{1}{d_\mathrm{sum}(v_j)}\cdot f(v_j) \\
        &=\sum_{v_j\in V}\frac{\alpha}{n}\cdot f(v_j)\sum_{v_i\in N(v_j)}\frac{m(e_{ij})}{d_\mathrm{sum}(v_j)} + \sum_{v_j\in V}\frac{1-\alpha}{n}\cdot f(v_j)\\
        &=\frac{\alpha}{n} \sum_{v_j\in V} f(v_j) + \frac{1-\alpha}{n}\sum_{v_j\in V} f(v_j)\\
        &= \frac{1}{n}\sum_{v_j\in V}f(v_j)\\
        &=\mathbb{E}_u(f)
    \end{align*}
    The first equation is valid as per Theorem \ref{the:lee}. The subsequent equality is derived from the application of Definition \ref{def:epi}, while the second equation expands based on Definitions \ref{def:omg} and \ref{def:g}. The third equation involves the substitution of the stationary distribution $\boldsymbol{\pi}$, acquired from Theorem \ref{the:sta}, and the application of Definition \ref{def:w}. The fourth equation is a result of algebraic manipulation and the application of Lemma \ref{lemma:mdsum}. The remaining equations are established through algebraic transformations, and the final equation holds in accordance with Definition \ref{def:eu}.
\end{proof}

Furthermore, the following lemma is true.
\begin{lemma}
\label{lemma:wto1}
    As $t$ approaches infinity,
    \[\frac{2|E|}{n}\cdot \hat{\mu_t}(w) \rightarrow 1.\]
\end{lemma}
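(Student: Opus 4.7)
The plan is to mirror the structure of the proof of Lemma \ref{lemma:wgtoeu}, replacing the integrand $wg$ by $w$ alone. First I would invoke Theorem \ref{the:lee} applied to the function $w$ on the state space $\Omega$: since $\{Z_t\}$ is an ergodic Markov chain with stationary distribution $\boldsymbol{\pi}$ (Theorem \ref{the:mal}), we have $\hat{\mu_t}(w)\to \mathbb{E}_{\pi}(w)$ almost surely, and hence it suffices to show that $\mathbb{E}_{\pi}(w)=n/(2|E|)$.

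Next I would expand $\mathbb{E}_{\pi}(w)=\sum_{e_{ij}\in\Omega}\pi(e_{ij})w(e_{ij})$ by splitting $\Omega$ (Definition \ref{def:omg}) into the off-diagonal part $\{e_{ij}:i\neq j\}$ and the diagonal part $\{e_{jj}\}$, organising the outer sum over the sampling node $v_j$:
\begin{equation*}
\mathbb{E}_{\pi}(w) = \sum_{v_j\in V}\sum_{v_i\in N(v_j)} \pi(e_{ij})\,w(e_{ij}) + \sum_{v_j\in V}\pi(e_{jj})\,w(e_{jj}).
\end{equation*}
Substituting the stationary distribution from Theorem \ref{the:sta} and Definition \ref{def:w}, the first sum becomes $\sum_{v_j\in V}\frac{\alpha}{2|E|\,d_\mathrm{sum}(v_j)}\sum_{v_i\in N(v_j)}m(e_{ij})$, and the second collapses to $\sum_{v_j\in V}\frac{1-\alpha}{2|E|}$ after cancelling $d_\mathrm{sum}(v_j)$.

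To finish, I would use the symmetry $m(e_{ij})=m(e_{ji})$ to rewrite the inner sum of the off-diagonal term as $\sum_{v_i\in N(v_j)}m(e_{ji})/d_\mathrm{sum}(v_j)$, which equals $1$ by Lemma \ref{lemma:mdsum} (applied at $v_j$ in place of $v_i$). Thus the off-diagonal contribution simplifies to $\sum_{v_j\in V}\alpha/(2|E|)=\alpha n/(2|E|)$ and the diagonal contribution to $(1-\alpha)n/(2|E|)$; adding them gives $\mathbb{E}_{\pi}(w)=n/(2|E|)$, whence $\frac{2|E|}{n}\hat{\mu_t}(w)\to 1$ almost surely.

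The argument is essentially a book-keeping exercise; the only place where care is needed is the reuse of Lemma \ref{lemma:mdsum}, because that lemma sums $m(e_{ij})/d_\mathrm{sum}(v_i)$ over neighbours of a fixed first index, whereas here the first index is the free variable and the fixed node is in the second slot. Invoking the symmetry $m(e_{ij})=m(e_{ji})$ (which the paper notes immediately after the definition of $m$) is what makes the lemma applicable, and it is the only non-mechanical step in the proof.
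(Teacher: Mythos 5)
Your proof is correct and takes essentially the same route as the paper: the paper's proof of this lemma simply instantiates the calculation of Lemma \ref{lemma:wgtoeu} with $g\equiv 1$ (equivalently $f\equiv 1$, so that $\mathbb{E}_u(f)=1$), which is exactly the computation you carry out, arriving at $\mathbb{E}_{\pi}(w)=n/(2|E|)$. Your observation that applying Lemma \ref{lemma:mdsum} requires the symmetry $m(e_{ij})=m(e_{ji})$ because the summation index sits in the first slot is a legitimate point of care that the paper's own write-up glosses over.
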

\begin{proof}
    Selecting a function $g$ with the property $g(e) = 1$ for any state $e \in \Omega,$ the equality holds through a transformation of expressions similar to that in Lemma \ref{lemma:wgtoeu}.
\end{proof}

Using Lemmas \ref{lemma:wgtoeu} and \ref{lemma:wto1}, as $t$ approaches infinity,
\begin{equation*}
    \frac{\sum_{s=1}^tw(Z_s)g(Z_s)}{\sum_{s=1}^tw(Z_s)} = \frac{\frac{2|E|}{n}\cdot \hat{\mu_t}(wg) }{\frac{2|E|}{n}\cdot \hat{\mu_t}(w) }\rightarrow \mathbf{E}_u(f) ~ \text{a.s.}
\end{equation*}

In conclusion, we have proven Theorem \ref{the:fea}.

\end{document}